\newtheorem{theorem}{Theorem}
\newtheorem{lemma}{Lemma}[section]
\newtheorem{corollary}[lemma]{Corollary}
\newtheorem{proposition}[lemma]{Proposition}
\newtheorem{observation}[lemma]{Observation}
\newtheorem*{rep@theorem}{\rep@title}
\newcommand{\newreptheorem}[2]{%
\newenvironment{rep#1}[1]{%
 \def\rep@title{#2 \ref{##1}}%
 \begin{rep@theorem}}%
 {\end{rep@theorem}}}
\newcommand{\defcal}[1]{\expandafter\newcommand\csname c#1\endcsname{{\mathcal{#1}}}}
\newcommand{\defbb}[1]{\expandafter\newcommand\csname b#1\endcsname{{\mathbb{#1}}}}
\newcounter{calBbCounter}
    \edef\letter{\Alph{calBbCounter}}
\newcommand{\eps}{\varepsilon}
\newcommand{\ie}{{\it i.e.}}
\newcommand{\nnR}{{\bR_{\geq 0}}}
\newcommand{\cupdot}{\mathbin{\mathaccent\cdot\cup}}
\newcommand{\partT}[2]{{#1}_{#2}}
\newcommand{\RPGreedy}{{\textsf{RPGreedy}}}
\newcommand{\RRGreedy}{{\textsf{RRGreedy}}}
\newcommand{\Split}{{\texttt{Split}}}
\title{Deterministic $(\nicefrac{1}{2}+\eps)$-Approximation for Submodular Maximization over a Matroid}
\author{
Niv Buchbinder\thanks{Dept. of Statistics and Operations Research, Tel Aviv University, Israel. E-mail: niv.buchbinder@gmail.com}
\and
Moran Feldman\thanks{Department of Mathematics and Computer Science, The Open University of Israel. E-mail: moranfe@openu.ac.il}
\and
Mohit Garg\thanks{Department of Mathematics and Computer Science, The Open University of Israel. E-mail: mohitga@openu.ac.il}
}
\begin{document}

\maketitle
\begin{abstract}
We study the problem of maximizing a monotone submodular function subject to a matroid constraint
and present a {\bf deterministic} algorithm that achieves $(\nicefrac{1}{2}+\eps)$-approximation for the problem. This algorithm is the first deterministic algorithm known to improve over the $\nicefrac{1}{2}$-approximation ratio of the classical greedy algorithm proved by Nemhauser, Wolsely and Fisher in $1978$.

\medskip
\noindent \textbf{Keywords:} Submodular optimization, matroid, deterministic algorithms
\end{abstract} 
\pagenumbering{Alph}
\thispagestyle{empty}
\clearpage
\pagenumbering{arabic}

\section{Introduction} \label{sec:introduction}

We study the problem of maximizing a monotone submodular function subject to a matroid constraint in the standard value oracle model.
In this problem we are given a matroid $\cM=(\cN,\cI)$ and oracle access to a non-negative monotone submodular function $f:2^{\cN}\to\nnR$. The goal is to find an independent set $I\in\cI$ that maximizes $f$ (see Section~\ref {sec:preliminaries} for definitions). This problem generalizes several extensively studied problems such as submodular welfare maximization, generalized assignment, Max-$k$-Cover and maximum weight independent set in matroid intersection \cite{CCPV11}.

In a classical paper from 1978, Fisher, Nemhauser and Wolsey \cite{FNW78} showed that a natural greedy algorithm achieves a $\nicefrac{1}{2}$-approximation for the problem.
Nemhauser and Wolsely \cite{NW78} also showed that no polynomial-time algorithm can yield an approximation ratio better than $1-\nicefrac 1 e$ (in the value oracle model).
Three decades later, in a breakthrough result, C\u{a}linescu, Chekuri, P\'al and Vondr\'ak presented an optimal randomized $(1-\nicefrac 1 e)$-approximation algorithm~\cite{CCPV11}.
Their algorithm is intrinsically randomized, and they explicitly asked ``whether a $(1-\nicefrac{1}{e})$-approximation can be obtained using a deterministic algorithm".
Such a result is known for a few special cases of the problem (e.g., Max-$k$-Coverage).
However, prior to this work, for the general problem no deterministic algorithm with an approximation ratio strictly better than $\nicefrac{1}{2}$ was known.

\subsection{Our Result}
In this work we present the first {\bf deterministic} algorithm achieving a $(\nicefrac{1}{2}+\eps)$-approximation for the problem of maximizing a monotone submodular function subject to a matroid constraint. Specifically, we prove the following theorem.

\begin{theorem}\label{thm:main}
There exists a deterministic polynomial time algorithm achieving $0.5008$-approx\-imation for the problem of maximizing a non-negative monotone submodular function subject to a matroid constraint.
\end{theorem}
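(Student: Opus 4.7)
The plan is to design a small pool of deterministic candidate algorithms and return the best of the independent sets they produce; the pool must be rich enough that on every instance at least one candidate achieves value $(\tfrac12+\eps)f(OPT)$. Only polynomially many candidates are run, and each uses only polynomially many value-oracle queries, so the overall algorithm remains deterministic and polynomial.

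The baseline candidate is the classical Fisher--Nemhauser--Wolsey greedy, which already attains $\tfrac12$. The first analytic step is to quantify the \emph{tight} case of its proof. By the matroid exchange axiom there is a bijection $e_i\leftrightarrow o_i$ between greedy picks $e_1,\dots,e_k$ and $OPT$-elements satisfying $f(e_i\mid G_{i-1})\ge f(o_i\mid G_{i-1})$; summing yields $f(G)\ge\tfrac12 f(G\cup OPT)\ge\tfrac12 f(OPT)$. If greedy returns less than $(\tfrac12+\eps)f(OPT)$, then the slack in each inequality is at most $O(\eps)f(OPT)$, which forces (i) $f(G\cup OPT)\approx f(OPT)$, i.e., $G$ contributes almost nothing on top of $OPT$, and (ii) $f(e_i\mid G_{i-1})\approx f(o_i\mid G_{i-1})$ along the entire execution. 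This is a rigid structural condition that the remaining candidates are designed to exploit.

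The additional candidates, presumably implemented by the subroutines $\RRGreedy$ and $\RPGreedy$ together with a $\Split$ step, emulate deterministically the features of the randomized algorithms that break the $\tfrac12$ barrier. A natural design is as follows: $\Split$ enumerates a small guessed substructure of $OPT$ (for instance a constant-size subset together with an assignment of it to matroid-exchange positions); $\RPGreedy$ runs a greedy along a permutation of the ground set guided by the guess, so that greedy's picks and the guessed portion of $OPT$ interact in a controlled way; and $\RRGreedy$ imitates the randomized residual greedy by repeatedly choosing from a restricted candidate pool of size $O(1/\eps)$, where the near-equality of marginals in the tight case guarantees that an alternative choice loses essentially nothing while provably gaining on a different term of the bound. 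When greedy is tight, at least one of these variants picks up an additive $\Omega(\eps)\,f(OPT)$ beyond $\tfrac12$.

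The main obstacle is the quantitative case analysis that couples the slack in greedy's bound with the gain of the variants. Formally, this reduces to a factor-revealing mathematical program whose variables are the per-step marginals along the executions of all candidates and whose constraints encode submodularity, monotonicity and the matroid exchange axiom. The theorem reduces to showing that this program's optimum is at least $0.5008$, which I would verify by an algebraic worst-case analysis (or, if the feasible region is too complex, by a finite-dimensional LP whose coefficients come from discretizing the guess produced by $\Split$). The explicit constant $\eps\approx 0.0008$ is the bottleneck of this program; any improvement would require either enlarging the candidate pool or sharpening the structural lemma about tight greedy executions, but the stated theorem only needs the existence of some positive $\eps$.
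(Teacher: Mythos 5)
Your proposal is a plan rather than a proof, and the plan's central step is missing. Everything hinges on the claim that when the classical greedy is within $O(\eps)$ of tight, ``at least one of these variants picks up an additive $\Omega(\eps)\,f(OPT)$ beyond $\tfrac12$.'' You never exhibit such a variant or prove such a gain; you only name a factor-revealing program that you ``would verify.'' That program \emph{is} the theorem. Moreover, the tightness conditions you extract --- $f(G\cup OPT)\approx f(OPT)$ and near-equality of the per-step marginals --- are exactly the structural rigidity that had resisted deterministic exploitation since 1978; without a concrete mechanism converting it into value for another candidate, the argument does not get off the ground. Your descriptions of the subroutines are also guesses that do not support any argument: a \Split{} that ``enumerates a guessed substructure of $OPT$'' needs either exponentially many guesses or a reason a polynomial-size guess list suffices, and an \RRGreedy{} that ``chooses from a restricted pool of size $O(1/\eps)$'' comes with no derandomization.

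The paper takes a different route that never touches the classical greedy or its tight case. Its \Split{} is oblivious to $OPT$: it greedily grows two disjoint sets $A_1,B_1$ whose union is a base, guaranteeing $\beta f(A_1)+(1-\beta)f(B_1)\ge\frac23\bigl(1-\sqrt{\beta(1-\beta)}\bigr)f(OPT)$ for a tunable $\beta$ (Lemma~\ref{lem:split}), together with a compatible partition $OPT=OPT_A\cupdot OPT_B$ (Lemma~\ref{lem:split_partition}). Each half is then completed to a base by a residual greedy whose output $A$ satisfies not only $f(A)\ge f(T_1)/2$ but also $3\,\bE[f(A)]\ge(1+g(x))f(T_1)+(1-x)f(T_2\mid T_1)$ for \emph{two} bases $T_1,T_2$ of the contracted matroid (Proposition~\ref{prop:random_greedy_properties}); the gain over $\tfrac12$ comes from instantiating $T_2=B_1$, i.e., from the fact that the other half of the split is itself a completion of $A_1$ with guaranteed value. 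Choosing $x=0.9$ and $\beta\approx0.35$ yields $0.5008$. Finally, derandomizing the residual greedy is itself a substantive construction --- \RPGreedy{} runs $k$ coupled copies and extracts a maximum-weight perfect matching in each round to realize the expectation deterministically --- not a matter of shrinking a candidate pool. None of these components appears in your proposal, so the gap is not one of detail but of the main idea.
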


It is worth mentioning that our algorithm is quite efficient as it makes $O(nk^2)$ queries to the value and independence oracles, where $n$ is the size of the ground set and $k$ is the rank of the matroid. Moreover, excluding the time required for these oracles queries, the time complexity of the algorithm is $\tilde{O}(nk^2 + k T)$, where $T$ is the time required for computing a maximum weight perfect matching in a bipartite graph with $2k$ vertices.

\subsection{Our Technique}

Before discussing the new ideas we developed, we first briefly explain why existing techniques fail to produce our result. A natural approach for obtaining deterministic algorithms is to apply derandomization methods to known randomized algorithms.
While this approach has been extremely successful in the field of approximation algorithms, it encounters two obstacles when applied to submodular maximization problems.
First, the (black box) value oracle makes it difficult to apply known derandomization techniques such as the conditional expectations method.
Second, many of the known algorithms in this area (including the above mentioned randomized $(1 - \nicefrac{1}{e})$-approximation algorithm of C\u{a}linescu et al.~\cite{CCPV11}) are based on optimizing a relaxation whose objective function is an extension of $f$ known as the multilinear extension. Unfortunately, the only known way to evaluate this powerful extension is by randomly sampling the function $f$, which makes any algorithm based on it intrinsically randomized.

In an attempt to avoid the above mentioned relaxation,  Filmus and Ward~\cite{FW14} designed a non-oblivious \textit{local search} algorithm achieving a randomized $(1-\nicefrac 1 e)$-approximation for the problem we consider. As their algorithm is based on local search, it only needs to consider integral solutions. However, the auxiliary potential function that it optimizes is, again, estimated by randomly sampling $f$.

Motivated by the difficulty to derandomize the above algorithms, we base our algorithm on two very simple components that are either already deterministic or are simple (and combinatorial) enough for derandomization. The first of these components is a simple deterministic greedy-like {\em split} algorithm. Given a matroid $\cM$, the split algorithm returns two disjoint sets $A$ and $B$ whose union is a base of $\cM$ and obey two additional properties. The more interesting of which is that the weighted average $\beta \cdot f(A) + (1-\beta)\cdot f(B)$ is large for any given value of $\beta \in [\nicefrac{1}{5},\nicefrac{4}{5}]$---the value is determined by a parameter of the algorithm. As this is a very natural property, we believe our split algorithm might be of independent interest.
The second component we use is the (randomized) {\em Residual Random Greedy} (\RRGreedy) algorithm that was originally described in~\cite{BFNS14}. This is, again, a very simple greedy-like algorithm for submodular maximization subject to a matroid constraint. We prove that, like the classical greedy algorithm, {\RRGreedy} achieves $\nicefrac{1}{2}$-approximation, but, more importantly, it also has additional useful properties that the classical greedy algorithm does not have.

Our main algorithm combines the two above mentioned components in a very natural way. Specifically, it uses the split algorithm to get two disjoint sets $A_1$ and $B_1$ whose union is a base, and it then completes each one of these sets into a full base by running {\RRGreedy} on appropriate contracted matroids.
The algorithm then outputs the better among these two bases, and we show that for a certain choice of the parameters its approximation ratio is strictly better than $\nicefrac{1}{2}$. Moreover, the only random part of our algorithm is {\RRGreedy}, which can be derandomized, making our algorithm fully deterministic.

\subsection{Additional Related Results}

The best known deterministic algorithms for most submodular maximization problems are not as good as their randomized counterparts. One exception is the problem of unconstrained maximization of a non-monotone submodular function. For this problem \cite{BF18} obtained a deterministic $\nicefrac{1}{2}$-approximation, which exactly matches the approximation ratio of the best randomized algorithm~\cite{BFNS15} and the information theoretic upper bound~\cite{FMV11}.
In other problems there is usually a gap between the performance guarantees of the best known deterministic and randomized algorithms. For example, 
for the problem of maximizing a \emph{non-monotone} submodular function subject to a matroid constraint the best deterministic algorithm has an approximation ratio of $\nicefrac{1}{4}$~\cite{LMNS10}, while the best randomized algorithm known achieves $0.385$-approximation~\cite{BF16a} (the best hardness result known for this problem is $0.478$ due to~\cite{GV11}). Another example is the problem of maximizing a monotone submodular function subject to a constant number of packing and covering constraints. Recently, Mizrachi et al.~\cite{MSSU18} developed a randomized $1 - 1/e - \eps$ approximation algorithm for this problem (this is optimal up to the $\eps$ term~\cite{NW78}). In the same work Mizrachi et al. also present two deterministic algorithms, but these algoritms only apply to special cases of the problem and achieve much worse approximation ratios (the better of which is $\nicefrac{1}{e}$). Nevertheless, despite these examples, we are not aware of any result showing a provable gap, for any submodular maximization problem, between the approximation ratios that can be achieved by deterministic and randomized algorithms.

The submodular welfare problem is an important special case of the problem of maximizing a monotone submodular function over a matroid. For this problem the Random Residual Greedy algorithm ({\RRGreedy}) is especially interesting because it is equivalent to applying the greedy algorithm to a natural online version of this problem under a random arrival model. Korula et al.~\cite{KMZ15} were interested in this online version and proved that {\RRGreedy} achieves $0.5052$-approximation for it. Recently, we were able to further improve the bound on the performance of {\RRGreedy} to $0.5096$ in the more general setting of partition matroids~\cite{BFG18} (this setting also allows one to interpret {\RRGreedy} as applying the greedy algorithm to a random order online variant of the problem).

Recall that the algorithm of C\u{a}linescu et al.~\cite{CCPV11} for maximizing a monotone submodular function subject to a matroid constraint achieves an optimal randomized approximation ratio of $1 - \nicefrac{1}{e}$. Despite this optimality, many works have aimed to improve over this algorithm in various aspects. The current work can be viewed as a first step towards a possible deterministic algorithm achieving the same approximation ratio, and the work of Filmus and Ward~\cite{FW14} mentioned earlier aimed to achieve the same randomized approximation ratio with a more combinatorial algorithm. Another line of work was dedicated to improving over the algorithm of~\cite{CCPV11} in terms of the time and oracle complexity~\cite{BV14,BFS17,MBKVK15}.

\section{Preliminaries} \label{sec:preliminaries}

We begin this section by formally defining the terms used in Section~\ref{sec:introduction} and the notation that we use in the next sections.

For every two sets $S, T \subseteq \cN$, we denote the marginal contribution of adding $T$ to $S$, with respect to a set function $f\colon 2^\cN \to \bR$, by $f(T \mid S) \triangleq f(T \cup S) - f(S)$. For an element $u \in \cN$ we use $f(u \mid S)$, $S + u$ and $S - u$ as shorthands for $f(\{u\} \mid S)$, $S \cup \{u\}$ and $S \setminus \{u\}$, respectively. We say that the set function $f$ is \emph{monotone} if $f(S) \leq f(T)$ for every two sets $S \subseteq T \subseteq \cN$ and {\em submodular} if $f(u \mid S) \geq f(u \mid T)$ for every two such sets and an element $u \in \cN \setminus T$. Since the description of a submodular function might be exponential in the size of the ground set, it is customary in the field of submodular maximization to assume that algorithms have access to the objective function $f$ only through a value oracle, \ie, an oracle that given a set $S \subseteq \cN$ returns $f(S)$.

A matroid $\cM$ over a ground set $\cN$ is defined as a pair $(\cN, \cI)$, where $\cI \subseteq 2^\cN$ obeys three properties: (i) $\cI \neq \varnothing$, (ii) if $S \subseteq T \subseteq \cN$ and $T \in \cI$, then $S \in \cI$ and (iii) if $S, T \in \cI$ and $|S| < |T|$ then there is an element $u \in T \setminus S$ such that $S + u \in \cI$. The sets in $\cI$ are called the \emph{independent} sets of $\cM$, and they are the feasible sets according to the constraint corresponding to this matroid. A base of a matroid is an inclusion-wise maximal independent set. It is not difficult to argue that all bases of a matroid have the same size. This size is known as the rank of the matroid, and we denote it by $k$ throughout this paper. Additionally, given an independent set $A$ of $\cM$, we denote by $\cM /A$ the matroid obtained from $\cM$ by contracting $A$. We refer the reader to~\cite{S03} for more information about matroid theory, as we assume basic knowledge of this theory. Like in the case of submodular functions, the size of the description of a matroid can be exponential in the size of its ground set. Thus, it is customary to assume that algorithms have access to matroids only through an independence oracle that given a set $S \subseteq \cN$ answers whether $S$ is independent or not.

Now, using the definitions given above, we formally state the problem that we consider in this paper. We are interested in the problem of maximizing a non-negative monotone submodular function $f\colon 2^\cN \to \nnR$ subject to a matroid $\cM = (\cN, \cI)$ constraint. For simplicity, we assume in the remaining parts of the paper that the rank $k$ of $\cM$ is at least $2$. Note that for $k = 1$ the above problem can be optimally solved by exhaustive search in linear time.

Next, we present a few technical lemmata that we use. We begin with (a rephrased version of) a useful lemma about submodular functions that was first proved in~\cite{FMV11}.

\begin{lemma}[Lemma~2.2 of~\cite{FMV11}] \label{lem:sampling}
Let $f\colon 2^\cN \to \bR$ be a submodular function, and let $T$ be an arbitrary set $T \subseteq \cN$. For every random set $T_p \subseteq T$ which contains every element of $T$ with probability $p$ (not necessarily independently),
\[
	\bE[f(T_p)] \geq (1 - p) f(\varnothing) + p \cdot f(T)
	\enspace.
\]
\end{lemma}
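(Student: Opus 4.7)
The plan is to produce a pointwise lower bound on $f(T_p) - f(\varnothing)$ that is \emph{linear} in the indicator variables $\mathbf{1}[t \in T_p]$. Once such a bound is in hand, taking expectation will only require knowledge of the marginals $\Pr[t \in T_p] = p$, and the (arbitrary) joint distribution of $T_p$ will play no further role — which is exactly why the ``not necessarily independently'' clause should cost us nothing.

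To construct this linearization I would fix an arbitrary ordering $t_1, \ldots, t_m$ of the elements of $T$ and start from the telescoping identity
\[
f(S) - f(\varnothing) \;=\; \sum_{i=1}^{m} \bigl[f(S \cap \{t_1, \ldots, t_i\}) - f(S \cap \{t_1, \ldots, t_{i-1}\})\bigr],
\]
valid for every $S \subseteq T$. When $t_i \notin S$ the $i$-th term vanishes; when $t_i \in S$ it equals $f(t_i \mid S \cap \{t_1, \ldots, t_{i-1}\})$, and submodularity applied to the inclusion $S \cap \{t_1, \ldots, t_{i-1}\} \subseteq \{t_1, \ldots, t_{i-1}\}$ lower-bounds this marginal by $f(t_i \mid \{t_1, \ldots, t_{i-1}\})$ — a quantity independent of $S$. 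Combining the two cases yields the desired $S$-linear bound
\[
f(S) \;\geq\; f(\varnothing) + \sum_{i=1}^{m} \mathbf{1}[t_i \in S] \cdot \bigl[f(\{t_1, \ldots, t_i\}) - f(\{t_1, \ldots, t_{i-1}\})\bigr].
\]

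The remaining step is routine: setting $S = T_p$, taking expectation, and applying linearity together with the marginal hypothesis collapses the right-hand side to $f(\varnothing) + p \sum_{i=1}^{m}\bigl[f(\{t_1, \ldots, t_i\}) - f(\{t_1, \ldots, t_{i-1}\})\bigr]$, whose inner sum telescopes back to $f(T) - f(\varnothing)$. Rearranging gives exactly $(1-p)f(\varnothing) + p\cdot f(T)$. I do not anticipate a real obstacle; the only creative ingredient is the choice of a fixed reference chain $\varnothing \subseteq \{t_1\} \subseteq \cdots \subseteq T$ along which submodularity can be invoked uniformly in $S$, turning what looks like a statement about correlated samples into a one-line calculation with marginals.
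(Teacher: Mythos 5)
Your proof is correct and complete. Note that the paper itself does not prove this statement---it is quoted (in rephrased form) from Lemma~2.2 of the cited work of Feige, Mirrokni and Vondr\'ak---so there is no in-paper argument to compare against; your chain argument (telescoping $f(S)-f(\varnothing)$ along a fixed ordering of $T$ and using submodularity to replace each marginal $f(t_i \mid S \cap \{t_1,\dots,t_{i-1}\})$ by the $S$-independent quantity $f(t_i \mid \{t_1,\dots,t_{i-1}\})$) is the standard way to establish it. The key observation, which you state explicitly, is that the resulting pointwise bound is linear in the indicators $\mathbf{1}[t_i \in S]$, so only the marginals $\Pr[t_i \in T_p] = p$ enter after taking expectations; since these probabilities are exactly $p$, the possibly negative coefficients $f(\{t_1,\dots,t_i\}) - f(\{t_1,\dots,t_{i-1}\})$ (recall $f$ need not be monotone) cause no difficulty, and the sum telescopes to $f(T) - f(\varnothing)$ as claimed.
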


Next, we present (rephrased versions of) two known structural lemmata about matroids.

\begin{lemma}[Proved by~\cite{G73,W74}] \label{le:bases_exchange}
Given two bases $B_1$ and $B_2$ of a matroid $\cM$, and a partition $B_1 = X_1 \cupdot Y_1$, there is a partition $B_2 = X_2 \cupdot Y_2$ such that $X_1 \cupdot Y_2$ and $X_2 \cupdot Y_1$ are both bases of $\cM$.
\end{lemma}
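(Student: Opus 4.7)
My plan is to reduce to the case $B_1 \cap B_2 = \varnothing$ and then obtain the desired partition via Edmonds' matroid partition theorem, with the required rank inequality supplied by a one-line submodularity computation on the rank function of $\cM$. For the reduction, I would contract $C := B_1 \cap B_2$ from $\cM$ to pass to $\cM' := \cM/C$, in which $B_1 \setminus C$ and $B_2 \setminus C$ are disjoint bases, and the given partition of $B_1$ restricts to a partition $(X_1 \setminus C) \cupdot (Y_1 \setminus C)$ of $B_1 \setminus C$. A partition $B_2 \setminus C = X_2' \cupdot Y_2'$ satisfying the lemma for $\cM'$ lifts to the required partition in $\cM$ by adjoining $X_1 \cap C$ to $X_2'$ and $Y_1 \cap C$ to $Y_2'$, since $S$ is a base of $\cM/C$ iff $S \cup C$ is a base of $\cM$, and this automatically keeps the two sides disjoint.

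Assuming now that $B_1 \cap B_2 = \varnothing$, I define two matroids on the common ground set $B_2$: $\cM_A := (\cM/X_1)|_{B_2}$, of rank at most $|Y_1|$, and $\cM_B := (\cM/Y_1)|_{B_2}$, of rank at most $|X_1|$. If I can produce a partition $B_2 = X_2 \cupdot Y_2$ with $Y_2 \in \cI(\cM_A)$ and $X_2 \in \cI(\cM_B)$, the lemma follows: since $|Y_2| + |X_2| = k = |Y_1| + |X_1|$ and both are at most $|Y_1|$ and $|X_1|$ respectively, equality must hold on each side, so $X_1 \cup Y_2$ and $X_2 \cup Y_1$ are independent in $\cM$ with cardinality $k$ and are therefore bases.

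The existence of such a partition is exactly the statement of Edmonds' matroid partition theorem applied to $\cM_A, \cM_B$ on $B_2$, which reduces to checking that $r_{\cM_A}(T) + r_{\cM_B}(T) \ge |T|$ for every $T \subseteq B_2$. Using the contraction-rank formula $r_{\cM/S}(T) = r_\cM(T \cup S) - r_\cM(S)$ (with $r_\cM(X_1) = |X_1|$ and $r_\cM(Y_1) = |Y_1|$, since $X_1, Y_1$ are independent), this rearranges to $r_\cM(T \cup X_1) + r_\cM(T \cup Y_1) \ge |T| + k$, which follows from submodularity of $r_\cM$:
\[
r_\cM(T \cup X_1) + r_\cM(T \cup Y_1) \;\ge\; r_\cM(T) + r_\cM(T \cup X_1 \cup Y_1) \;=\; |T| + k,
\]
where the final equality uses $r_\cM(T) = |T|$ (because $T \subseteq B_2$ is independent) and $r_\cM(T \cup B_1) = k$ (because $B_1$ is a base). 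The main bookkeeping obstacle is the reduction in the first paragraph---verifying that the lift preserves disjointness of the two sides and yields a genuine partition of $B_2$---but this is routine, and the real content of the proof is the one-line submodularity inequality above.
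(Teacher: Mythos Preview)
Your argument is correct. The reduction to disjoint bases via contraction is clean and the lift back is exactly as you describe; the heart of the proof, verifying the Edmonds partition condition $r_{\cM_A}(T)+r_{\cM_B}(T)\ge |T|$ on $B_2$ via submodularity of $r_\cM$, goes through precisely because $(T\cup X_1)\cap(T\cup Y_1)=T$ (using $X_1\cap Y_1=\varnothing$) and $T\cup B_1$ spans.

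There is no proof in the paper to compare against: the lemma is simply quoted as a classical result of Greene and Woodall and used as a black box. For context, the original proofs proceed by induction on $|X_1|$ using the symmetric base-exchange property (for $x\in X_1\setminus B_2$ find $y\in B_2\setminus B_1$ with both $B_1-x+y$ and $B_2-y+x$ bases, then recurse). Your route through the matroid partition theorem is a genuinely different and arguably more conceptual proof: it trades the element-by-element exchange argument for a single global rank inequality, at the cost of invoking Edmonds' theorem. Both approaches are short; yours makes the role of rank submodularity explicit, while the inductive proof stays closer to first principles.
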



\begin{lemma}[Proved by~\cite{B69} and can also be found as Corollary~39.12a in~\cite{S03}] \label{le:perfect_matching_two_bases}
Let $A$ and $B$ be two bases of a matroid $\cM = (\cN, \cI)$. Then, there exists a bijection $h : A \setminus B \rightarrow B \setminus A$ such that for every $u \in A \setminus B$, $(B - h(u)) + u \in \cI$.
\end{lemma}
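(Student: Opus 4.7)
The plan is to reduce the lemma to a statement about perfect matchings in an auxiliary bipartite graph and then verify Hall's condition. Since $|A| = |B| = k$, we have $|A \setminus B| = |B \setminus A|$. I would build a bipartite graph $G$ with vertex classes $A \setminus B$ and $B \setminus A$, placing an edge between $u \in A \setminus B$ and $v \in B \setminus A$ exactly when $(B - v) + u \in \cI$. Any perfect matching of $G$ directly yields the required bijection $h$, so the whole task reduces to verifying Hall's condition in $G$.

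To verify Hall's condition, I would fix an arbitrary $S \subseteq A \setminus B$, let $N(S) \subseteq B \setminus A$ denote its neighborhood in $G$, and set $T = B \setminus N(S)$. The cleanest route is to prove the auxiliary claim that $T \cup S$ is independent in $\cM$. Once this is shown, $|T| + |S| = |T \cup S| \leq k = |B| = |T| + |N(S)|$, which rearranges to $|S| \leq |N(S)|$, as required.

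To establish the auxiliary claim, I would argue by contradiction. If $T \cup S$ contained a circuit $C$, then since $T \subseteq B$ is independent, $C$ would have to meet $S$, say at some $u \in S$. Then $C \subseteq T + u \subseteq B + u$; because $B$ is a base, the unique circuit of $B + u$ containing $u$ is the fundamental circuit $C(u, B)$, so $C = C(u, B)$ and in particular $C(u, B) - u \subseteq T$. Now $u \in A \setminus B$ combined with the independence of $A$ forces $C(u, B) \not\subseteq A$, so some $v \in C(u, B) - u$ must lie in $B \setminus A$. By the definition of the fundamental circuit, $(B - v) + u \in \cI$, which places $v$ in $N(S)$. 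But $v \in T = B \setminus N(S)$, a contradiction.

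The main obstacle I anticipate is precisely the circuit-based argument in the last paragraph: one must simultaneously invoke the uniqueness of the fundamental circuit of $u$ inside $B + u$ and the independence of $A$ to guarantee that $C(u, B)$ actually intersects $B \setminus A$, since only such an intersection produces an edge of $G$ incident to $u$. The rest of the argument is a routine packaging via Hall's marriage theorem. As an alternative to the fundamental-circuit approach, one could try to apply Lemma~\ref{le:bases_exchange} to the partition $A = S \cupdot (A \setminus S)$ to extract a subset of $B$ swappable with $S$, but turning that subset into an element-wise exchange certificate seems less direct than the Hall approach above.
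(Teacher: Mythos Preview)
The paper does not give its own proof of this lemma; it is quoted as a known result due to Brualdi and appearing in Schrijver's book. Your Hall's-theorem approach on the exchange graph is indeed the standard route, and the overall strategy is sound. There is, however, a genuine gap in your proof of the auxiliary claim that $T \cup S$ is independent. After picking $u \in C \cap S$ you assert that $C \subseteq T + u \subseteq B + u$, but this containment holds only when $C \cap S = \{u\}$; nothing you have written rules out the possibility that the circuit $C$ meets $S$ in two or more elements, and if $u' \in (C \cap S) \setminus \{u\}$ then $u' \notin T + u$, so the inclusion fails and you cannot identify $C$ with the fundamental circuit $C(u,B)$. Your contradiction therefore covers only the case $|C \cap S| = 1$.

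The auxiliary claim is in fact true, but the cleanest repair bypasses it. For every $u \in S$ one has $C(u,B) - u \subseteq N(S) \cup (A \cap B)$ (elements of $C(u,B)-u$ lying in $B\setminus A$ are neighbours of $u$ by the very definition of your graph, and the rest lie in $A\cap B$), so $S$ lies in the closure of $N(S) \cup (A \cap B)$. Since $S \cup (A \cap B) \subseteq A$ is independent, this yields
\[
|S| + |A \cap B| \;\leq\; r\bigl(N(S) \cup (A \cap B)\bigr) \;\leq\; |N(S)| + |A \cap B|
\]
(where $r$ denotes the matroid rank function), and hence $|S| \leq |N(S)|$ directly. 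If you prefer to keep the $T \cup S$ formulation, applying rank submodularity to the pair $T \cup S$ and $N(S) \cup (A \cap B) \cup S$ gives $r(T \cup S) \geq |T| + |S|$, which closes the gap.
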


Finally, we prove in Appendix~\ref{sec:proof} the following additional lemma about matroids, which generalizes the last lemma.

\begin{lemma}\label{lem:matroid_matching_greedy}
Let $A$ and $B$ be two bases of a matroid $\cM = (\cN, \cI)$, where $A$ is a maximum weight base according to some weight function $w\colon \cN \to \nnR$.
Then, there exist a bijective function $h\colon A \to B$ such that for every element $u \in A$
\begin{enumerate}
	\item $(B - h(u)) + u$ is a base of $\cM$.
	\item $w(u) \geq w(h(u))$.
\end{enumerate}
\end{lemma}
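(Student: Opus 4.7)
The plan is to reduce the problem to a bipartite matching question and then resolve it via Hall's theorem combined with the maximum-weight property. First, for every $u \in A \cap B$, set $h(u) = u$; both required conditions hold trivially, so it suffices to construct a bijection $h \colon A \setminus B \to B \setminus A$ satisfying $(B - h(u)) + u \in \cI$ and $w(u) \geq w(h(u))$ for every $u \in A \setminus B$. Form the bipartite graph $G_w$ on $(A \setminus B) \cup (B \setminus A)$ whose edges are the pairs $(u, v)$ meeting both conditions; the desired $h$ exists if and only if $G_w$ admits a perfect matching. The plan is to verify Hall's marriage condition on $G_w$.

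Suppose for contradiction that some $S \subseteq A \setminus B$ has $|N_{G_w}(S)| < |S|$; pick $S$ to be minimal with this property. Set $T := (B \setminus A) \setminus N_{G_w}(S)$, so $|T| > |A \setminus B| - |S|$, and for every $u \in S$ and $v \in T$, either $(B - v) + u \notin \cI$ or $w(v) > w(u)$. The key additional tool is Lemma~\ref{le:perfect_matching_two_bases} applied to $B, A$ with the roles swapped, which yields a bijection $g \colon B \setminus A \to A \setminus B$ with $(A - g(v)) + v \in \cI$ for every $v$; combined with the maximum-weight property of $A$ applied to each single exchange $(A - g(v)) + v$, this gives $w(g(v)) \geq w(v)$ for every $v \in B \setminus A$. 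The plan is to derive a contradiction by exhibiting a base $A'$ of $\cM$ with $w(A') > w(A)$. The natural candidate has the form $A' = (A \setminus S_0) \cup T_0$ for subsets $S_0 \subseteq S$ and $T_0 \subseteq T$ of equal size, with $T_0$ paired to $S_0$ via the ``bad'' exchanges witnessing membership in $T$ (so that each $v \in T_0$ satisfies $w(v) > w(u)$ for its paired $u \in S_0$), guaranteeing $w(T_0) > w(S_0)$ and hence $w(A') > w(A)$.

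The main obstacle is showing that $A'$ really is a base of $\cM$: the single-element exchanges supplied by Lemma~\ref{le:perfect_matching_two_bases} (whether through $g$ or through the pairs witnessing $v \in T$) do not in general compose into a valid multi-element exchange, since the union $(A \setminus S_0) \cup T_0$ need not be independent in an arbitrary matroid. The plan for overcoming this is to construct the exchange one swap at a time, iteratively applying Lemma~\ref{le:perfect_matching_two_bases} to appropriately contracted submatroids (contracting the elements already matched) and using the minimality of the Hall-obstructing set $S$ to maintain feasibility at each step, thereby assembling a sequence of compatible single swaps whose composition produces the desired base $A'$ and completes the contradiction.
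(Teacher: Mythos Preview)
Your Hall's-theorem reduction is a reasonable starting point, but the proof has a genuine gap at the final step, and the ``plan'' you sketch does not close it. The difficulty is a mismatch of directions: the edges of $G_w$ (and hence the sets $S$ and $T$) are defined by exchanges \emph{into $B$}, while the optimality of $A$ gives you information about exchanges \emph{into $A$}. Your bijection $g\colon B\setminus A \to A\setminus B$ with $(A-g(v))+v\in\cI$ yields $w(g(v))\ge w(v)$, which is the \emph{wrong} direction for building a base heavier than $A$. Concretely, if you pick $v\in T$ with $g(v)\in S$ (which you can, by counting), then from $v\in T$ you learn only that either $(B-v)+g(v)\notin\cI$ or $w(v)>w(g(v))$; the second alternative is already ruled out by optimality of $A$, so you learn $(B-v)+g(v)\notin\cI$ --- a statement about $B$, not about $A$ --- and you are no closer to a heavier $A'$. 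The iterative-contraction idea does not rescue this: after a single swap $A\to (A-u)+v$ the new base need not be maximum-weight, so the structure you rely on evaporates, and minimality of $S$ does not supply the missing link.

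What is missing is a tool that ties the two exchange directions together. The paper's proof supplies it via the \emph{symmetric} exchange property (Brualdi; Theorem~39.12 in Schrijver): for $u_A\in A\setminus B$ there exists $u_B\in B\setminus A$ with \emph{both} $(A-u_A)+u_B$ and $(B-u_B)+u_A$ bases. Choosing $u_A$ of minimum weight in $A$, optimality of $A$ forces $w(u_A)\ge w(u_B)$; one then contracts $u_B$, checks that $A-u_A$ is still a maximum-weight base of $\cM/u_B$, and finishes by induction on the rank. If you want to stay within your Hall framework, the same symmetric-exchange lemma lets you verify Hall's condition directly: given $S$, take $u^\ast\in S$ of minimum weight, get $v^\ast$ from symmetric exchange (so $v^\ast\in N_{G_w}(S)$), and recurse in $\cM/v^\ast$ on $S-u^\ast$. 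Either way, Lemma~\ref{le:perfect_matching_two_bases} alone is not enough; you need the two-sided version.
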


\paragraph{Paper Organization.} In Section~\ref{sec:basic_algorithms}, we present the two simple algorithms that are used as building blocks for our main algorithm. In Section~\ref{sec:main_algorithm}, we present the main algorithm itself, and finally, in Section~\ref{sec:derandomizing_greedy}, we explain how to derandomize one of the two algorithms presented in Section~\ref{sec:basic_algorithms} (the other one is deterministic to begin with).

\section{Basic Algorithms} \label{sec:basic_algorithms}


\subsection{Split Algorithm}

The first of the above mentioned simple algorithms that are used as building blocks for our main algorithm is given as Algorithm~\ref{alg:split}. We note that this algorithm takes a parameter $p \in [0, 1]$ as input. 

\begin{algorithm}
\caption{\Split$(f, \cM, p)$} \label{alg:split}
Initialize: $A_0 \leftarrow \varnothing$, $B_0 \leftarrow \varnothing$.\\
\For{$i$ = $1$ \KwTo $k$}
{
    Let $u^A_i = \arg\max_{u \in \cM / (A_{i - 1} \cup B_{i-1})}\{f(u \mid A_{i - 1})\}$.\\
    Let $u^B_i = \arg\max_{u \in \cM / (A_{i - 1} \cup B_{i-1})}\{f(u \mid B_{i - 1})\}$.\\
    \If{ $p \cdot f(u^A_i \mid  A_{i - 1}) \geq (1-p) \cdot f(u^B_i \mid  B_{i - 1})$ }
    {
    $A_i \leftarrow A_{i -1} + u^A_i$.}
    \Else
    {$B_i \leftarrow B_{i -1} + u^B_i$.
    }
}
\Return{$(A_k,B_k)$}.
\end{algorithm}

The rest of this section is devoted to analyzing Algorithm~\ref{alg:split}. We begin with the following immediate observation.

\begin{observation}
The output sets $A_k$ and $B_k$ of Algorithm~\ref{alg:split} are disjoint, and their union is a base of $\cM$.
\end{observation}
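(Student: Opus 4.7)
The plan is a straightforward induction on the loop index $i$. I would prove the stronger invariant that after iteration $i$ the sets $A_i$ and $B_i$ are disjoint, their union $A_i \cup B_i$ is independent in $\cM$, and $|A_i \cup B_i| = i$. The observation then follows by taking $i = k$, since an independent set of size equal to the rank of $\cM$ is a base.

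The base case $i = 0$ is immediate from $A_0 = B_0 = \varnothing$. For the inductive step, suppose the invariant holds for $i - 1$ with $i \leq k$. Because $|A_{i-1} \cup B_{i-1}| = i - 1 < k$ and $k$ is the rank of $\cM$, the union $A_{i-1} \cup B_{i-1}$ is independent but not a base. By the matroid augmentation axiom (property~(iii)), there exists an element $u \in \cN \setminus (A_{i-1} \cup B_{i-1})$ such that $(A_{i-1} \cup B_{i-1}) + u$ is independent in $\cM$; equivalently, the contracted matroid $\cM/(A_{i-1} \cup B_{i-1})$ has at least one non-loop element, so both $u^A_i$ and $u^B_i$ are well-defined. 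Whichever branch of the \textbf{if}-statement the algorithm takes, the single element it appends (either $u^A_i$ to $A$ or $u^B_i$ to $B$) was selected from the ground set of $\cM/(A_{i-1} \cup B_{i-1})$, so it does not already belong to $A_{i-1} \cup B_{i-1}$ and its addition preserves independence of the union. Hence $A_i$ and $B_i$ remain disjoint, $A_i \cup B_i$ stays independent, and $|A_i \cup B_i|$ grows by exactly one.

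After $k$ iterations we thus obtain $A_k \cap B_k = \varnothing$ and $A_k \cup B_k$ independent with $|A_k \cup B_k| = k$, which is a base of $\cM$. The only potentially tricky point is ensuring that the two arg-max operations are non-vacuous at every iteration, and this is precisely what the matroid augmentation axiom delivers while $i \leq k$; so there is no substantive obstacle, and the whole observation reduces to this one-line induction.
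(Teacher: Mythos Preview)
Your induction is exactly the routine argument the paper has in mind; the paper in fact gives no proof at all, calling it an ``immediate observation,'' and your write-up is the natural way to spell it out. One terminological nit: you justify that ``its addition preserves independence of the union'' by saying the element was chosen from the \emph{ground set} of $\cM/(A_{i-1}\cup B_{i-1})$, but membership in the ground set alone does not rule out loops; what you need (and what the algorithm's notation ``$u \in \cM/(A_{i-1}\cup B_{i-1})$'' is intended to convey) is that $\{u\}$ is independent in the contraction, i.e., $(A_{i-1}\cup B_{i-1})+u \in \cI$---you already invoke this when you talk about ``non-loop elements,'' so the argument is sound once that interpretation is made explicit.
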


Our next objective is to lower bound the values of the output sets of Algorithm~\ref{alg:split}.


\begin{lemma}\label{lem:split}
Let $T$ be a base of $\cM$ and $\frac{1}{5} \leq \beta \leq \frac{4}{5}$, then for $p=\frac{\beta}{\beta + \sqrt{(1-\beta)\beta}}$, Algorithm \ref{alg:split} satisfies
\[
	\beta \cdot f(A_k) + (1-\beta)\cdot f(B_k) \geq \frac{2}{3}\left(1- \sqrt{(1-\beta)\beta}\right) \cdot f(T)
	\enspace.
\]
\end{lemma}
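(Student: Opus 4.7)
The plan is to combine a matroid-exchange argument with a convex combination of two greedy-style inequalities (one per side). Throughout, let $C_i = A_i \cup B_i$ and let $c_i$ denote the actual increase of $p \cdot f(A) + (1-p) \cdot f(B)$ at iteration $i$, so that $\sum_{i=1}^k c_i = p \cdot f(A_k) + (1-p) \cdot f(B_k)$. The selection rule together with the optimality of $u^A_i$ and $u^B_i$ guarantees that $c_i \geq p \cdot f(u \mid A_{i-1})$ and $c_i \geq (1-p) \cdot f(u \mid B_{i-1})$ for every element $u$ with $C_{i-1} + u \in \cI$.

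The first step is to match the iterations to $T$. Since both $C_k$ and $T$ are bases of $\cM$, Lemma~\ref{le:perfect_matching_two_bases} (applied with the pair $(T, C_k)$) yields a bijection $h\colon T \setminus C_k \to C_k \setminus T$ with $(C_k - h(s)) + s \in \cI$ for every $s \in T \setminus C_k$. For each $i$, let $v_i$ be the element added at iteration $i$, and define $t_i = h^{-1}(v_i)$ if $v_i \notin T$ and $t_i = v_i$ otherwise; then $\{t_1, \ldots, t_k\} = T$. In both cases $C_{i-1} \subseteq C_k - v_i$, which gives $C_{i-1} + t_i \in \cI$, so $t_i$ is a legal candidate at iteration $i$ and therefore $c_i \geq p \cdot f(t_i \mid A_{i-1})$ and $c_i \geq (1-p) \cdot f(t_i \mid B_{i-1})$.

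The second step is to sum and telescope. Since $A_{i-1} \subseteq A_k \cup \{t_1, \ldots, t_{i-1}\}$, submodularity gives $f(t_i \mid A_{i-1}) \geq f(t_i \mid A_k \cup \{t_1, \ldots, t_{i-1}\})$, and summing over $i$ telescopes to $f(A_k \cup T) - f(A_k) \geq f(T) - f(A_k)$ by monotonicity. The analogous bound holds for $B_k$. Substituting $\sum_i c_i = p\,f(A_k) + (1-p)\,f(B_k)$ into the two resulting inequalities produces
\[
2p \cdot f(A_k) + (1-p) \cdot f(B_k) \geq p \cdot f(T), \qquad p \cdot f(A_k) + 2(1-p) \cdot f(B_k) \geq (1-p) \cdot f(T).
\]

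Finally, take the convex combination of these two inequalities with weights $\mu$ and $1-\mu$, which yields $(1+\mu) p \cdot f(A_k) + (2-\mu)(1-p) \cdot f(B_k) \geq [\mu p + (1-\mu)(1-p)] \cdot f(T)$. The goal is to pick $\mu$ so that the coefficients of $f(A_k)$ and $f(B_k)$ are proportional to $\beta$ and $1-\beta$. Using the definition of $p$, one checks $p/(1-p) = \sqrt{\beta/(1-\beta)}$; writing $r = \sqrt{\beta/(1-\beta)}$, setting $\mu = (2r-1)/(1+r)$ matches the ratio $\beta : (1-\beta)$, and the hypothesis $\beta \in [\nicefrac{1}{5}, \nicefrac{4}{5}]$ is exactly what keeps $r \in [\nicefrac{1}{2}, 2]$ and hence $\mu \in [0,1]$. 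The main obstacle will be the algebra in this last step: one has to simplify the proportionality constant and the right-hand-side coefficient using the identity $(1-\beta)(1+r)^2 = 1 + 2\sqrt{(1-\beta)\beta}$ and the equality $(1-\beta)r = \sqrt{(1-\beta)\beta}$, after which the bound collapses to $\beta \cdot f(A_k) + (1-\beta) \cdot f(B_k) \geq \frac{2}{3}\bigl(1 - \sqrt{(1-\beta)\beta}\bigr) f(T)$, as required.
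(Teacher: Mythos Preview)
Your proof is correct and follows essentially the same route as the paper's: both arguments match the iterations to elements of $T$ via a matroid-exchange property, telescope using submodularity and monotonicity to obtain the single inequality $(1+\mu)p\cdot f(A_k)+(2-\mu)(1-p)\cdot f(B_k)\geq[\mu p+(1-\mu)(1-p)]\cdot f(T)$, and then plug in the same values of $p$ and $\mu$ (your $\mu$ coincides with the paper's $\alpha$). The only cosmetic differences are that the paper builds auxiliary sets $T_i$ iteratively rather than invoking Lemma~\ref{le:perfect_matching_two_bases} once, and it introduces the weighting parameter before telescoping rather than first deriving the two one-sided inequalities and then combining.
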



\begin{proof}

By construction of the algorithm, we get for every $\alpha \in [0, 1]$

\begin{align} \label{eq:p_combination}
	p \cdot [f(A_i)- f(A_{i - 1})] + (1-p) \cdot [f(B_i)- f(B_{i - 1})&]\\ \nonumber
	={}&
	\max \mleft\{p \cdot f(u_i^A \mid A_{i-1}), (1-p) \cdot f(u_i^B \mid B_{i-1})\mright\}\\ \nonumber
	\geq{} & \alpha p \cdot f(u_i^A \mid A_{i-1}) + (1-\alpha)(1 - p) \cdot f(u_i^B \mid B_{i-1})
	\enspace.
\end{align}

Let us now construct for every $0 \leq i \leq k$ a set $\partT{T}{i}$ such that $A_i \cupdot B_i \cupdot \partT{T}{i}$ is a base of $\cM$. For $i=0$, we define $\partT{T}{0} = T$, and for $0 < i \leq k$ the set $\partT{T}{i}$ is defined recursively based on the behavior of Algorithm~\ref{alg:split} as follows.
Assume that $\partT{T}{i-1}$ is already constructed, and let $u_i$ denote the single element of $(A_i \cupdot B_i) \setminus (A_{i-1} \cupdot B_{i-1})$---\ie, the element that was added by Algorithm \ref{alg:split} in its $i$-th iteration. Since $A_i \cupdot B_i = A_{i-1} \cupdot B_{i-1} + u_i$ is an independent set of $\cM$ and $A_{i-1} \cupdot B_{i-1} \cupdot T_{i-1}$ is a base of $\cM$, there must be an element $v_i \in T_{i-1}$ such that $A_i \cupdot B_i \cupdot (T_{i-1} - v_i)$ is a base of $\cM$. Setting now $T_i = T_{i-1} - v_i$, we are guaranteed that $A_i \cupdot B_i \cupdot \partT{T}{i}$ is a base of $\cM$ as required.

Consider now an arbitrary $1 \leq i \leq k$. Since $A_{i - 1} \cupdot B_{i - 1} \cupdot \partT{T}{i-1}$ is a base of $\cM$, $v_i$ is a candidate for both $u_i^A$ and $u_i^B$. Together with the fact that $u_i^A$ and $u_i^B$ are maximizers with respect to $f(\cdot \mid A_{i-1})$ and $f(\cdot \mid B_{i-1})$, we get that the rightmost expression of~\eqref{eq:p_combination} is at least
\begin{align*}
	\alpha p \cdot f(&v_i \mid A_{i-1}) + (1-\alpha) (1-p) \cdot f(v_i \mid B_{i-1})\\
	\geq{} &
	\alpha p \cdot f(v_i \mid A_{i-1} \cupdot T_i) + (1-\alpha) (1-p) \cdot f(v_i \mid B_{i-1} \cupdot T_i)\\
	={} &
	\alpha p \cdot [f(A_{i - 1} \cup T_{i-1}) - f(A_{i-1} \cup T_i)] + (1-\alpha) (1-p) \cdot [f(B_{i - 1} \cup T_{i-1}) - f(B_{i-1} \cup T_i)]\\
	\geq{} &
	\alpha p \cdot [f(A_{i - 1} \cup T_{i-1}) - f(A_i \cup T_i)] + (1-\alpha)(1-p) \cdot [f(B_{i - 1} \cup T_{i-1}) - f(B_{i} \cup T_i)]
	\enspace,
\end{align*}
where the first inequality follows by submodularity, and the final inequality follows by monotonicity.
Combining the last two inequalities and rearranging, we get that the expression
\[p \cdot f(A_i) +p\alpha \cdot f(A_{i} \cup T_{i}) + (1-p) \cdot f(B_i) + (1-p)(1-\alpha) \cdot f(B_{i} \cup T_{i})\] is an increasing function of $i$ for $0\leq i\leq k$.
In particular, since $A_0 = B_0 = T_k = \varnothing$ and $T_0 = T$, we get
\begin{align*}
	p(1+\alpha) \cdot f(A_k)  + (1-p)(2-\alpha) \cdot f(B_k)
	\geq{} &
	f(\varnothing) + [\alpha p + (1-\alpha)(1-p)] \cdot f(T)\\
	\geq{} &
	[\alpha p + (1-\alpha)(1-p)] \cdot f(T)
	\enspace,
\end{align*}
where the second inequality follows from the non-negativity of $f$.

We now set $p= \frac{\beta}{\beta + \sqrt{(1-\beta)\beta}}$ and $\alpha=\frac{1+\beta-3\sqrt{(1-\beta)\beta}}{2\beta- 1}$
(for $\beta=\nicefrac{1}{2}$, this expression for $\alpha$ is not defined, so we set $\alpha=\nicefrac{1}{2}$).
Note that for $\frac{1}{5} \leq \beta \leq \frac{4}{5}$ these values for $p$ and $\alpha$ are indeed in the range $[0,1]$, and additionally they imply $p(1+\alpha)= \frac{3 \beta}{1 + 2 \sqrt{(1 - \beta) \beta}}$, $(1-p)(2-\alpha) = \frac{3-3\beta}{1+2\sqrt{(1-\beta)\beta}}$ and
$p\alpha + (1-p)(1 - \alpha) = \frac{2(1 - \sqrt{(1-\beta)\beta})}{1+2\sqrt{(1-\beta)\beta}}$. The lemma now follows by plugging these expressions into the previous inequality and multiplying by $(1 + 2 \sqrt{(1 - \beta) \beta})/3$.
%
\end{proof}

The final property of Algorithm~\ref{alg:split} that we need to prove is that for every base $T$ of $\cM$ there is a good way to split $T$ with respect to the output sets of the algorithm.

\begin{lemma} \label{lem:split_partition}
For every base $T$ of $\cM$, there exists a partition of $T$ into two disjoint sets $T_A \cupdot T_B$ such that
\begin{compactitem}
	\item $A_k \cupdot T_A$ and $B_k \cupdot T_B$ are both bases of $\cM$.
	\item $f(A_k) + f(A_k \cup T_A) \geq f(T)$ and $f(B_k) + f(B_k \cup T_B) \geq f(T)$.
\end{compactitem}
\end{lemma}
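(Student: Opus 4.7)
The plan is to prove Lemma~\ref{lem:split_partition} in two stages: first secure the matroid structural conditions, then the value inequalities. For the matroid structure I would apply Lemma~\ref{le:bases_exchange} directly to the two bases $A_k \cupdot B_k$ and $T$, using the distinguished partition $X_1 = A_k,\ Y_1 = B_k$ of the first base. The lemma immediately yields a partition $T = T_A \cupdot T_B$ for which both $A_k \cupdot T_A$ and $B_k \cupdot T_B$ are bases of $\cM$, which is the first bullet of the claim.

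For the value inequalities I would first reduce the goal to a simpler sufficient condition. By monotonicity, $f(A_k \cup T_A) \geq f(T_A)$; by submodularity applied to the disjoint sets $T_A$ and $T_B$,
\[
	f(T_A) + f(T_B) \geq f(T_A \cup T_B) + f(\varnothing) \geq f(T).
\]
Combining these gives
\[
	f(A_k) + f(A_k \cup T_A) \geq f(A_k) + f(T_A) \geq f(T) + \bigl[f(A_k) - f(T_B)\bigr],
\]
so it suffices to establish $f(A_k) \geq f(T_B)$ and, symmetrically, $f(B_k) \geq f(T_A)$. These are the key inequalities, and they depend on the specific choice of partition, since the singletons involved have sizes $|T_B| = |A_k|$ and $|T_A| = |B_k|$.

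The partition delivered by Lemma~\ref{le:bases_exchange} is not unique and an arbitrary choice need not satisfy the $f$-inequalities above. My plan is therefore to produce the partition iteratively, in lockstep with Algorithm~\ref{alg:split}, by maintaining residual sets $T_A^i, T_B^i$ that evolve with the algorithm: at an $A$-iteration we remove a carefully chosen element from $T_A^{i-1}$, and at a $B$-iteration from $T_B^{i-1}$, using matroid exchange as in the construction of the $T_i$ sequence in the proof of Lemma~\ref{lem:split}. The removed element can be arranged to be a legitimate candidate for both $u_i^A$ and $u_i^B$ in the algorithm's argmax, so that—combined with submodularity and the greedy's comparison rule—one obtains a value invariant that at $i = k$ implies the claimed inequalities. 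The main obstacle is coordinating two requirements throughout the induction: the element removed at each iteration must simultaneously (i) keep $A_i \cupdot T_A^i$ and $B_i \cupdot T_B^i$ bases of $\cM$, and (ii) guarantee that after all $k$ iterations the total $k$ removals pick $k$ distinct elements of $T$, so that $T_A^k$ and $T_B^k$ form a disjoint partition of $T$ rather than leaving an element uncovered by both. Negotiating matroid exchange, disjointness, and the greedy's marginal comparison at each step is where the proof will require the most care.
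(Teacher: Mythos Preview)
Your reduction to the sufficient conditions $f(A_k)\geq f(T_B)$ and $f(B_k)\geq f(T_A)$ is the weak point, and in fact these conditions are \emph{not} achievable in general for any choice of partition. Take $p=0$ (the lemma is stated for arbitrary $p$): then every iteration with a positive $B$-marginal goes to $B$, so $A_k=\varnothing$ and $B_k$ is simply the greedy base. This forces $T_A=T$ and $T_B=\varnothing$, and your condition becomes $f(B_k)\geq f(T)$, i.e.\ greedy must be optimal --- which is false already for rank-$3$ uniform matroids with coverage functions. The actual claim $f(B_k)+f(B_k\cup T_B)=2f(B_k)\geq f(T)$ is just the usual greedy bound and does hold. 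The loss happens precisely at your monotonicity step $f(A_k\cup T_A)\geq f(T_A)$: by discarding $A_k$ from the joint term you give up exactly the interaction that makes the inequality true. So any iterative scheme aimed at establishing $f(A_k)\geq f(T_B)$ is doomed, regardless of how carefully you coordinate disjointness.

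The paper's argument avoids this by never separating $A_k$ from $T_A$ in the value bound. It first fixes the partition $T=T_A\cupdot T_B$ once and for all via Lemma~\ref{le:bases_exchange}, and then proves each inequality by its own induction. For the $A$-side it maintains a shrinking set $T_i\subseteq T_B$ with the invariants that $A_i\cupdot B_k\cupdot T_i$ is a base and $f(A_i)+f(A_i\cup T_i\cup T_A)\geq f(T)$. Two features are worth noting. First, the invariant keeps $A_i\cup T_A$ together inside $f$, so the step that broke your reduction never occurs. Second, the matroid structure carried through the induction uses the \emph{final} set $B_k$ (not $B_{i-1}$); since $B_{i-1}\subseteq B_k$, the exchange element $v_i\in T_{i-1}$ automatically lies in $\cM/(A_{i-1}\cup B_{i-1})$ and is therefore a legitimate competitor for $u_i^A$. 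Because the partition is fixed in advance and the two sides are handled independently, the disjointness obstacle you identified simply does not arise.
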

\begin{proof}
By Lemma~\ref{le:bases_exchange}, since $A_k \cupdot B_k$ is a base of $\cM$, there must be a partition of $T$ into two disjoint subsets $T_A$ and $T_B$ such that $A_k \cupdot T_A$ and $B_k \cupdot T_B$ are both bases. In the remaining part of the proof we show that $f(A_k) + f(A_k \cup T_A) \geq f(T)$. Proving that the inequality $f(B_k) + f(B_k \cup T_B) \geq f(T)$ also holds can be done in a symmetric way.

We prove by induction that for every $0 \leq i \leq k$, there must exist a set $\partT{T}{i} \subseteq T_B$ such that $A_i \cupdot B_k \cupdot \partT{T}{i}$ is a base of $\cM$ and $f(A_i) + f(A_i \cupdot \partT{T}{i} \cupdot T_A) \geq f(T)$. For $i = 0$ we define $T_0 = T_B$, which makes the claim that we would like to prove by induction trivial since $f$ is non-negative and $A_0 = \varnothing$. Assume now that this claim holds for $0 \leq i - 1 < k$, and let us prove it for $i$. There are two cases to consider. If $A_i = A_{i-1}$, then we are done due to the induction hypothesis by setting $T_i = T_{i-1}$. Thus, it remains to consider the case in which $A_i = A_{i-1} + u_i^A$. In this case, since $A_i \cupdot B_k \subseteq A_k \cupdot B_k$ is independent in $\cM$ and $A_{i - 1} \cupdot B_k \cupdot T_{i-1}$ is a base, there must be an element $v_i \in T_{i-1}$ such that $A_i \cupdot B_k \cupdot (T_{i-1} - v_i) = (A_{i-1} + u_i) \cupdot B_k \cupdot (T_{i-1} - v_i)$ is also a base of $\cM$. Choosing now $T_i = T_{i-1} - v_i$, which certainly obeys the requirement that $A_i \cupdot B_k \cupdot T_i$ is a base of $\cM$, we get
\begin{align*}
	f(A_i) - f(&A_{i-1})
	=
	f(u_i^A \mid A_{i-1})
	\geq
	f(v_i \mid A_{i-1})
	\geq
	f(v_i \mid A_{i-1} \cupdot T_i \cupdot T_A)\\
	={} &
	f(A_{i-1} \cupdot T_{i-1} \cupdot T_A) - f(A_{i-1} \cupdot T_i \cupdot T_A)
	\geq
	f(A_{i-1} \cupdot T_{i-1} \cupdot T_A) - f(A_i \cupdot T_i \cupdot T_A)
	\enspace,
\end{align*}
where the first inequality follows from the choice of $u_i^A$ since the fact that $A_{i-1} \cupdot B_k \cupdot T_{i-1}$ is independent implies that $v_i$ is a candidate for $u_i^A$, the second inequality follows from the submodularity of $f$ and the last inequality follows from its monotonicity. Combining the last inequality with the induction hypothesis, we now get
\[
	f(A_i) + f(A_i \cupdot T_i \cupdot T_A)
	\geq
	f(A_{i-1}) + f(A_{i - 1} \cupdot T_{i - 1} \cupdot T_A )
	\geq
	f(T)
	\enspace,
\]
which completes the proof by induction.

Plugging $i = k$ into the claim proved above, and observing that the fact that $A_k \cupdot B_k \cupdot T_k$ is a base of $\cM$ implies $T_k = \varnothing$, we get
\[
	f(A_k) + f(A_k \cupdot T_A)
	\geq
	f(T)
	\enspace.
	\qedhere
\]
\end{proof}


\subsection{Residual Random Greedy Algorithm}

The second simple algorithm that we need is a procedure known as the Residual Random Greedy algorithm (\RRGreedy) that was originally described by~\cite{BFNS14} and is given here as Algorithm~\ref{alg:ResidualRandomGreedy}.

\begin{algorithm}
\caption{\textsf{Residual Random Greedy -- \RRGreedy}$(f,\cM)$} \label{alg:ResidualRandomGreedy}
Initialize: $A_0 \leftarrow \varnothing$.\\
\For{$i$ = $1$ \KwTo $k$}
{
    Let $M_i$ be a base of $\cM / A_{i-1}$ maximizing $\sum_{u \in M_i} f(u \mid A_{i - 1})$.\\
    Let $A_i \leftarrow A_{i -1} + u_i$, where $u_i$ is a uniformly random element from $M_i$.\\
}
Return $A_{k}$.
\end{algorithm}

For the analysis of Algorithm~\ref{alg:ResidualRandomGreedy}, we use the following construction.
Let $T$ be an arbitrary base of $\cM$. Then, we construct for every $0 \leq i \leq k$ a set $\partT{T}{i}$ which is a base of $\cM / A_i$ as follows.
We define $\partT{T}{0} = T$, and for $0 < i \leq k$ we define $\partT{T}{i}$ recursively based on the behavior of Algorithm~\ref{alg:ResidualRandomGreedy}. Assume $\partT{T}{i-1}$ is already constructed, and let $h_i\colon M_i \to \partT{T}{i - 1}$ be a bijection mapping every element $u \in M_i$ to an element of $\partT{T}{i - 1}$ in such a way that $(\partT{T}{i - 1} - h_i(u)) + u_i$ is a base of $\cM / A_{i-1}$. The existence of such a function follows immediately from Lemma~\ref{le:perfect_matching_two_bases} since $\partT{T}{i - 1}$ and $M_i$ are both bases of $\cM / A_{i - 1}$ ($h_i$ maps elements of $\partT{T}{i - 1} \cap M_i$ to themselves). We now set $\partT{T}{i} = \partT{T}{i - 1} - h_i(u_i)$, and one can observe that it is indeed a base of $\cM / A_i$ since $A_i = A_{i - 1} + u_i$. It is important for the analysis of Algorithm~\ref{alg:ResidualRandomGreedy} that the choice of $h_i$ (among the possibly multiple functions obeying the required properties) is made independently of the random choice of $u_i$ out of $M_i$. Note that choosing $h_i$ in such a way gurantees that $h_i(u_i)$ is a uniformly random element of $\partT{T}{i - 1}$, and thus implies the next observation.

\begin{observation} \label{obs:T_i_distribution}
$\partT{T}{i}$ is a uniformly random subset of $T$ of size $k-i$.
\end{observation}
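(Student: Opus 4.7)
The plan is to proceed by induction on $i$, showing that $\partT{T}{i}$ is uniformly distributed over the $(k-i)$-subsets of $T$. The base case $i=0$ is trivial, since $\partT{T}{0}=T$ is the unique subset of $T$ of size $k$. For the inductive step, I would assume the statement for $i-1$ and derive it for $i$ by analyzing the single-element removal $\partT{T}{i}=\partT{T}{i-1}-h_i(u_i)$.

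The key step is the claim that, conditioned on the entire history up to and including the construction of $h_i$ (that is, on $A_{i-1}$, $\partT{T}{i-1}$, $M_i$, and the specific bijection $h_i$ chosen), the removed element $h_i(u_i)$ is uniformly distributed over $\partT{T}{i-1}$. This is the content of the sentence preceding the observation that insists $h_i$ be selected independently of $u_i$: since $u_i$ is drawn uniformly from $M_i$ by definition of {\RRGreedy} and is independent of $h_i$, and since $h_i$ is a bijection from $M_i$ onto $\partT{T}{i-1}$, the image $h_i(u_i)$ is uniform on $\partT{T}{i-1}$.

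Given this, conditional on $\partT{T}{i-1}$ the set $\partT{T}{i}$ is uniform over the $k-i+1$ subsets of $\partT{T}{i-1}$ of size $k-i$. To unconditional, I would fix an arbitrary $S'\subseteq T$ with $|S'|=k-i$ and write
\[
    \Pr[\partT{T}{i}=S'] \;=\; \sum_{v \in T \setminus S'} \Pr[\partT{T}{i-1}=S'+v]\cdot \Pr\mleft[h_i(u_i)=v \;\middle|\; \partT{T}{i-1}=S'+v\mright].
\]
By the induction hypothesis the first factor equals $1/\binom{k}{k-i+1}$ for each of the $i$ choices of $v\in T\setminus S'$, and the key claim gives $1/(k-i+1)$ for the second factor. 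Multiplying and summing yields $\Pr[\partT{T}{i}=S'] = i/[(k-i+1)\binom{k}{k-i+1}] = 1/\binom{k}{k-i}$, independent of $S'$, which is exactly the uniformity statement.

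The only delicate point is the independence of $h_i$ from $u_i$; without it, one could adversarially skew the distribution of $h_i(u_i)$ and the conclusion would fail. The construction described in the paragraph preceding the observation is phrased precisely to rule this out, so the hard part is really just to flag the independence explicitly. Once that is in place, the argument is a short counting exercise and I do not foresee additional obstacles.
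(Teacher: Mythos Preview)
Your proof is correct and follows exactly the approach the paper indicates: the paper does not give a separate proof but simply notes in the preceding sentence that the independence of $h_i$ from $u_i$ makes $h_i(u_i)$ uniform on $\partT{T}{i-1}$ and that this ``implies the next observation.'' You have spelled out that implication via the natural induction and counting argument, which is precisely what the paper leaves to the reader.
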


The following lemma is a central component used in the proofs of all the claims that we present later regarding Algorithm~\ref{alg:ResidualRandomGreedy}.

\begin{lemma} \label{lem:A_OPT_development}
For every $1 \leq i \leq k$ and a (possibly random) set $S \subseteq \cN$,
\[
	\bE[f(A_i) + f(A_i \cup \partT{T}{i} \cup S)]
	\geq
	\bE[f(A_{i - 1}) + f(A_{i - 1} \cup \partT{T}{i - 1} \cup S)]
	\enspace.
\]
\end{lemma}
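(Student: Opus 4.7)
The plan is to condition on the history $\cF_{i-1}$ that determines $A_{i-1}$, $\partT{T}{i-1}$, $M_i$ and the bijection $h_i$, prove the inequality conditionally on $\cF_{i-1}$, and then take an outer expectation (which will also average over any randomness in $S$). The starting point is the following symmetry: because $u_i$ is uniform over $M_i$ and $h_i$ is chosen independently of $u_i$, the element $v_i := h_i(u_i)$ is uniformly distributed over $\partT{T}{i-1}$; note that both sets have size $k-i+1$, the rank of $\cM / A_{i-1}$.

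Next I would establish a set-theoretic inclusion. Writing $W := A_{i-1} \cup \partT{T}{i-1} \cup S$, and using that $v_i \in \partT{T}{i-1}$ with $v_i \notin A_{i-1}$ (since $\partT{T}{i-1}$ is a base of $\cM / A_{i-1}$), the update rules of Algorithm~\ref{alg:ResidualRandomGreedy} and the recursive definition of $\partT{T}{i}$ yield
\[
A_i \cup \partT{T}{i} \cup S
= (A_{i-1} + u_i) \cup (\partT{T}{i-1} - v_i) \cup S
\supseteq W - v_i,
\]
with a small case analysis according to whether $v_i \in S$. Monotonicity and submodularity of $f$ then give, pointwise in the randomness,
\[
f(W) - f(A_i \cup \partT{T}{i} \cup S)
\leq f(W) - f(W - v_i)
= f(v_i \mid W - v_i)
\leq f(v_i \mid A_{i-1}),
\]
where the last step uses $A_{i-1} \subseteq W - v_i$.

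I would now take expectations conditional on $\cF_{i-1}$. Using the uniformity of $v_i$ over $\partT{T}{i-1}$ and of $u_i$ over $M_i$, the displays above become
\[
\bE\bigl[f(W) - f(A_i \cup \partT{T}{i} \cup S) \bigm| \cF_{i-1}\bigr]
\leq \frac{1}{k-i+1} \sum_{v \in \partT{T}{i-1}} f(v \mid A_{i-1})
\]
and
\[
\bE\bigl[f(A_i) - f(A_{i-1}) \bigm| \cF_{i-1}\bigr]
= \frac{1}{k-i+1} \sum_{u \in M_i} f(u \mid A_{i-1}).
\]
The final ingredient is the greedy optimality of $M_i$: since $\partT{T}{i-1}$ is itself a base of $\cM / A_{i-1}$, it is a feasible candidate in the definition of $M_i$, so the right-hand side of the second display dominates that of the first. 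Combining and rearranging gives the claimed inequality conditional on $\cF_{i-1}$; an outer expectation then finishes the proof.

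The main subtlety is the set-theoretic inclusion $A_i \cup \partT{T}{i} \cup S \supseteq W - v_i$, which needs a small case analysis depending on whether $v_i \in S$, coupled with the observation that the independence of $h_i$ from $u_i$ is precisely what makes $v_i$ uniformly distributed over $\partT{T}{i-1}$. Once these two points are in place, the rest is a routine combination of monotonicity, submodularity and the definition of $M_i$, and no further difficulties arise.
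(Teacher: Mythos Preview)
Your proposal is correct and follows essentially the same approach as the paper: condition on the first $i-1$ iterations, use the greedy optimality of $M_i$ to compare $\sum_{u\in M_i} f(u\mid A_{i-1})$ against $\sum_{v\in \partT{T}{i-1}} f(v\mid A_{i-1})$, and apply submodularity and monotonicity to translate the marginal $f(v_i\mid A_{i-1})$ into the drop $f(A_{i-1}\cup \partT{T}{i-1}\cup S)-f(A_i\cup \partT{T}{i}\cup S)$. The only cosmetic difference is that you package the monotonicity step as the set inclusion $A_i\cup \partT{T}{i}\cup S \supseteq W - v_i$ and then apply submodularity with the smaller set $A_{i-1}$, whereas the paper applies submodularity first (bounding $f(v_i\mid A_{i-1})\ge f(v_i\mid A_{i-1}\cup \partT{T}{i}\cup S)$) and monotonicity afterwards; the two orderings are equivalent here. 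Incidentally, your set inclusion holds without any case split on whether $v_i\in S$: every $x\in W-v_i$ lies in $A_{i-1}$, in $\partT{T}{i-1}-v_i=\partT{T}{i}$, or in $S$, hence in $A_i\cup \partT{T}{i}\cup S$.
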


\begin{proof}
We prove that the lemma holds when conditioned on any fixed choice for the random decisions made by Algorithm~\ref{alg:ResidualRandomGreedy} in its first $i - 1$ iterations, which implies that the lemma holds also unconditionally by the law of total expectation. Given  such a conditioning, the sets $A_{i-1}$ and $M_i$ become deterministic, and thus, when implicitly assuming such a conditioning, we get
\begin{align*}
	\bE[f(A_i) - f(&A_{i-1})]
	=
	\bE[f(u_i \mid  A_{i-1})]
	=
	\frac{\sum_{u \in M_i} f(u \mid A_{i-1})}{k - i + 1}\\
	\geq{} &
	\frac{\sum_{u \in \partT{T}{i-1}} f(u \mid A_{i-1})}{k - i + 1}
	=
	\bE[f(h_i(u_i) \mid A_{i-1})]
	\geq
	\bE[f(h_i(u_i) \mid A_{i-1} \cup \partT{T}{i} \cup S)]\\
	={} &
	\bE[f(A_{i-1} \cup \partT{T}{i-1} \cup S) - f(A_{i-1} \cup \partT{T}{i} \cup S)]
	\geq
	\bE[f(A_{i-1} \cup \partT{T}{i-1} \cup S) - f(A_{i} \cup \partT{T}{i} \cup S)]
	\enspace,
\end{align*}
where the first inequality follows from the definition of $M_i$, the second inequality holds due to the monotonicity and submodularity of $f$, and the final inequality follows again from the monotonicity of $f$.
\end{proof}

\begin{corollary}\label{cor:mainineq}
For every $1 \leq i \leq k$ and a (possibly random) set $S \subseteq \cN$,
\[
	\bE[f(A_k) + f(A_k \cup S)]
	\geq
	\bE[f(A_i) + f(A_i \cup \partT{T}{i} \cup S)]
	\geq
	\bE[f(T \cup S)]
	\enspace.
\]
\end{corollary}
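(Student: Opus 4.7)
The proof is a straightforward telescoping application of Lemma~\ref{lem:A_OPT_development} in both directions, combined with two boundary observations about $T_0$ and $T_k$.

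\textbf{Plan for the first inequality.} I would iterate Lemma~\ref{lem:A_OPT_development} starting from index $i$ up to index $k$, obtaining
\[
\bE[f(A_k) + f(A_k \cup T_k \cup S)] \geq \bE[f(A_{k-1}) + f(A_{k-1} \cup T_{k-1} \cup S)] \geq \cdots \geq \bE[f(A_i) + f(A_i \cup T_i \cup S)].
\]
To convert the left-hand side into the desired $\bE[f(A_k) + f(A_k \cup S)]$, I would observe that $A_k$ is already a base of $\cM$ (it has size $k$ and is independent), so $\cM/A_k$ has rank $0$, which forces $T_k = \varnothing$. This eliminates the $T_k$ term cleanly.

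\textbf{Plan for the second inequality.} I would iterate Lemma~\ref{lem:A_OPT_development} in the opposite direction, chaining down from index $i$ to index $0$:
\[
\bE[f(A_i) + f(A_i \cup T_i \cup S)] \geq \cdots \geq \bE[f(A_0) + f(A_0 \cup T_0 \cup S)].
\]
By the base-case definitions in the construction preceding the corollary, $A_0 = \varnothing$ and $T_0 = T$, so the right-hand side equals $f(\varnothing) + \bE[f(T \cup S)]$. The non-negativity of $f$ then gives the bound $\bE[f(T \cup S)]$ as required.

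\textbf{Anticipated difficulties.} There is no real obstacle here; the corollary is essentially a packaging of Lemma~\ref{lem:A_OPT_development}. The only thing one needs to be a touch careful about is that Lemma~\ref{lem:A_OPT_development} is stated with an expectation that accounts for all randomness of the algorithm's choices, so the telescoping can be applied unconditionally without any further conditioning tricks; the random set $S$ may depend on the algorithm's randomness, but the lemma already allows $S$ to be random. The two boundary identifications ($T_k = \varnothing$ because $A_k$ is a full base, and $T_0 = T$ with $A_0 = \varnothing$) are the only non-mechanical steps.
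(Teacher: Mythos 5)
Your proposal is correct and follows the paper's proof essentially verbatim: telescope Lemma~\ref{lem:A_OPT_development} from $0$ to $k$, then use $A_0 = \varnothing$, $T_0 = T$, the non-negativity of $f$, and $T_k = \varnothing$ (your rank-$0$ justification for the last fact is equivalent to the paper's observation that $A_k$ and $A_k \cupdot T_k$ are both bases). Nothing to add.
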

\begin{proof}
Lemma~\ref{lem:A_OPT_development} shows that the expectation of $f(A_i) + f(A_i \cup \partT{T}{i} \cup S)$ is a non-decreasing function of $i$. Thus,
\[
	\bE[f(A_k) + f(A_k \cup \partT{T}{k} \cup S)]
	\geq
	\bE[f(A_i) + f(A_i \cup \partT{T}{i} \cup S)]
	\geq
	\bE[f(A_0) + f(A_0 \cup \partT{T}{0} \cup S)]
	\enspace.
\]
The corollary now follows by recalling that $A_0 = \varnothing$ and $\partT{T}{0} = T$ by definition, observing that $f(A_0) \geq 0$ since $f$ is non-negative and observing that $T_k = \varnothing$ since $A_k$ and $A_k \cupdot T_k$ are both bases of $\cM$.
\end{proof}

Setting $S = \varnothing$, the last corollary implies that the expected value of $f(A_k)$ is at least half of $f(T)$. The next lemma gives a lower bound on $\bE[f(A_i)]$ that applies for other values of $i$ as well. Let $g(x) \triangleq x - x^2/2$.

\begin{lemma} \label{lem:basic_analysis}
For every $0 \leq i \leq k$,
$
	\bE[f(A_i)]
	\geq
	[g(\nicefrac{i}{k}) + \delta] \cdot f(T)
$ where $\delta = 1/(2k^2)$ for $0 < i < k$ and $0$ otherwise.
\end{lemma}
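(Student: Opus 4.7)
The plan is to derive a one-step recursion for $a_i \triangleq \bE[f(A_i)]$ and then induct on $i$. First, since $u_i$ is drawn uniformly from $M_i$ (a set of size $k-i+1$, the rank of $\cM/A_{i-1}$), one has $\bE[f(u_i\mid A_{i-1})\mid A_{i-1}]=(k-i+1)^{-1}\sum_{u\in M_i}f(u\mid A_{i-1})$. Because $M_i$ is the maximum-weight base of $\cM/A_{i-1}$ with respect to the marginal-value weights and $T_{i-1}$ is another such base, this sum is at least $\sum_{u\in T_{i-1}}f(u\mid A_{i-1})\geq f(A_{i-1}\cup T_{i-1})-f(A_{i-1})$ by submodularity. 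Taking total expectations and applying Corollary~\ref{cor:mainineq} with $S=\varnothing$ (which yields $\bE[f(A_{i-1}\cup T_{i-1})]\geq f(T)-a_{i-1}$), I obtain the recursion
\[
a_i-a_{i-1}\;\geq\;\frac{f(T)-2a_{i-1}}{k-i+1}.
\]

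Assuming $f(T)>0$ (the other case is trivial) and writing $c_i\triangleq a_i/f(T)$, the recursion rearranges to $(k-i+1)\,c_i\geq (k-i-1)\,c_{i-1}+1$. The two boundary cases are handled directly: $c_0=0=g(0)$, and at $i=k$ the coefficient $k-i-1$ becomes negative, so instead of using the recursion I would invoke Corollary~\ref{cor:mainineq} with $i=k$ and $S=\varnothing$, which gives $2a_k\geq f(T)$, i.e., $c_k\geq \nicefrac{1}{2}=g(1)$. Both endpoints therefore satisfy the claim with $\delta=0$.

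For $0<i<k$ I would induct on $i$, invoking only the weaker hypothesis $c_{i-1}\geq g(\nicefrac{i-1}{k})$ (which holds from $c_0=g(0)$ when $i=1$ and from the induction hypothesis otherwise). Since $k-i-1\geq 0$ in this range, the recursion is monotone in $c_{i-1}$, so $c_i\geq[(k-i-1)\,g(\nicefrac{i-1}{k})+1]/(k-i+1)$. A short polynomial expansion using $g(x)=x-x^2/2$ verifies the exact identity
\[
(k-i-1)\,g(\nicefrac{i-1}{k})+1\;=\;(k-i+1)\bigl[g(\nicefrac{i}{k})+\tfrac{1}{2k^2}\bigr],
\]
from which $c_i\geq g(\nicefrac{i}{k})+\nicefrac{1}{2k^2}$ follows immediately. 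The main subtlety is noticing that the recursion cannot be used at $i=k$ (its monotonicity breaks there) and that the $\nicefrac{1}{2k^2}$ slack emerges automatically from this algebraic identity even when the induction carries only the weaker $g$-bound; the rest is routine bookkeeping.
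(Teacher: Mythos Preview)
Your proof is correct and mirrors the paper's exactly: same recursion $a_i-a_{i-1}\geq (f(T)-2a_{i-1})/(k-i+1)$, same separate handling of the endpoints $i\in\{0,k\}$, and same induction for $0<i<k$ using only the weaker hypothesis $c_{i-1}\geq g\!\left(\tfrac{i-1}{k}\right)$ together with $k-i-1\geq 0$. The only cosmetic difference is that the paper verifies the final algebraic step via the relation $g'(x)=(1-2g(x))/(1-x)$ and an integral computation, whereas you state the equivalent polynomial identity directly; the minor slip ``$c_0=0$'' (you only need and actually use $c_0\geq 0$) is harmless.
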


\begin{proof}
Since $g(1) = 1/2$, the above discussion implies that in the special case of $i = k$ the lemma follows from Corollary~\ref{cor:mainineq}. We prove the lemma for the other cases by induction. For $i = 0$ the lemma holds, even without the expectation, due to the non-negativity of $f$ since $g(0) = 0$. The rest of the proof is devoted to showing that the lemma holds for $1 \leq i < k$ given that it holds for $i - 1$.

Let $\cE$ be an arbitrary event fixing the random choices made by Algorithm~\ref{alg:ResidualRandomGreedy} in its first $i - 1$ iterations. Observe that conditioned on this event the sets $A_{i-1}$, $M_i$ and $\partT{T}{i-1}$ become deterministic. Thus, conditioned on $\cE$,
\begin{align*}
	\bE[f(A_i) - f(A_{i - 1})]
	={} &
	\bE[f(u_i \mid  A_{i-1})]
	=
	\frac{\sum_{u \in M_i} f(u \mid A_{i-1})}{k - i + 1}
	\geq
	\frac{\sum_{u \in \partT{T}{i}} f(u \mid A_{i-1})}{k - i + 1}\\
	\geq{} &
	\frac{f(\partT{T}{i} \mid A_{i-1})}{k - i + 1}
	\geq
	\frac{f(T) - 2\bE[f(A_{i-1})]}{k - i + 1}
	\enspace,
\end{align*}
where the first inequality follows from the definition of $M_i$, the second inequality follows from the submodularity of $f$ and the last inequality follows from the second inequality of Corollary~\ref{cor:mainineq} by choosing $S = \varnothing$.

Taking expectation now over all the possible choices of $\cE$, we get
{\mleftright
\begin{align*}
	\bE[f(A_i)&]
	\geq
	\bE[f(A_{i-1})] + \frac{f(T) - 2\bE[f(A_{i-1})]}{k - i + 1}
	=
	\frac{k - i - 1}{k - i + 1} \cdot \bE[f(A_{i-1})] + \frac{f(T)}{k - i + 1}\\
	\geq{} &
	\frac{k - i - 1}{k - i + 1} \cdot g\left(\frac{i - 1}{k}\right) \cdot f(T) + \frac{f(T)}{k - i + 1}
	=
	\left[g\left(\frac{i - 1}{k}\right) + \frac{1 - 2g(\frac{i-1}{k})}{k - i + 1} \right] \cdot f(T)
	\enspace,
\end{align*}
}%
where the second inequality follows from the induction hypothesis (since $i \leq k - 1$). Using the observation that the derivative $g'(x)$ of $g(x)$ obeys $g'(x) = (1 - 2g(x)) / (1 - x)$, the last inequality yields
{\mleftright
\begin{align*}
	\frac{\bE[f(A_i)]}{f(T)}
	\geq{} &
	g\left(\frac{i - 1}{k}\right) + \frac{1 - 2g(\nicefrac{(i-1)}{k})}{k - i + 1}
	=
	g\left(\frac{i - 1}{k}\right) + \frac{g'(\nicefrac{(i-1)}{k})}{k}\\
	={} &
	g\left(\frac{i - 1}{k}\right) + \int_{(i - 1)/k}^{i/k} g'(x) dx + \int_{(i - 1)/k}^{i/k} \left[g'\left(\frac{i-1}{k}\right) - g'(x)\right] dx\\
	={} &
	g(\nicefrac{i}{k}) + \int_{(i - 1)/k}^{i/k} \left(x - \frac{i-1}{k}\right) dx
	=
	g(\nicefrac{i}{k}) + \frac{1}{2k^2}
	\enspace.
	\qedhere
\end{align*}
}%
\end{proof}

The following lemma generalizes the previous one to non-integer values (at the cost of a small loss in the guarantee).
\begin{lemma} \label{lem:basic_analysis frac}
For every $0 \leq x < 1$, let $\alpha= \lfloor kx + 1 \rfloor - kx$ and $i=\lfloor xk \rfloor$, then
\[
	\alpha \cdot \bE[f(A_{i})]+ (1-\alpha)\cdot \bE[f(A_{i+1})]
	\geq g(x)  \cdot f(T)
	\enspace.
\]
\end{lemma}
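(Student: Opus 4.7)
The plan is to apply Lemma~\ref{lem:basic_analysis} to both $\bE[f(A_i)]$ and $\bE[f(A_{i+1})]$, take the weighted combination, and then absorb the concavity loss into the $\delta$-slack provided by Lemma~\ref{lem:basic_analysis}.

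First I would observe that the choice of $\alpha$ and $i$ is engineered so that $x$ lies on the line segment between $i/k$ and $(i+1)/k$ with weights $\alpha$ and $1-\alpha$. More precisely, if $kx$ is not an integer then $\lfloor kx+1\rfloor = i+1$, so $\alpha = i+1-kx$ and one checks that $\alpha\cdot(i/k) + (1-\alpha)\cdot((i+1)/k) = x$; if $kx$ is an integer then $i=kx$ and $\alpha=1$, so the identity is trivial. In particular $\alpha\in[0,1]$ and, since $0\le x<1$, we always have $0\le i \le k-1$, so the indices $i$ and $i+1$ are legitimate.

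Next I would apply Lemma~\ref{lem:basic_analysis} to each term to get
\[
\alpha\bE[f(A_i)] + (1-\alpha)\bE[f(A_{i+1})]
\;\ge\; \bigl[\alpha g(i/k) + (1-\alpha)g((i+1)/k) + \alpha\delta_i + (1-\alpha)\delta_{i+1}\bigr]\cdot f(T),
\]
where $\delta_j = 1/(2k^2)$ whenever $0<j<k$ and $\delta_j=0$ otherwise. The target inequality therefore reduces to showing
\[
\alpha\delta_i + (1-\alpha)\delta_{i+1} \;\ge\; g(x) - \bigl[\alpha g(i/k) + (1-\alpha)g((i+1)/k)\bigr].
\]

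The key calculation is to evaluate the concavity gap on the right-hand side. Using $g(y) = y - y^2/2$ and the identity $\alpha a^2 + (1-\alpha)b^2 - (\alpha a + (1-\alpha)b)^2 = \alpha(1-\alpha)(b-a)^2$ with $a=i/k$, $b=(i+1)/k$, I would obtain the clean identity
\[
g(x) - \bigl[\alpha g(i/k) + (1-\alpha)g((i+1)/k)\bigr] \;=\; \frac{\alpha(1-\alpha)}{2k^2}.
\]
So the inequality to verify is simply $\alpha\delta_i + (1-\alpha)\delta_{i+1} \ge \alpha(1-\alpha)/(2k^2)$.

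Finally I would case-split on whether $i=0$ or $i+1=k$ (these cannot happen simultaneously since we assume $k\ge 2$). In the generic case $0<i$ and $i+1<k$, both $\delta$'s equal $1/(2k^2)$, so the left side is $1/(2k^2)\ge \alpha(1-\alpha)/(2k^2)$. If $i=0$ then $\delta_i=0$ and the required inequality becomes $(1-\alpha)/(2k^2) \ge \alpha(1-\alpha)/(2k^2)$, which holds since $\alpha\le 1$; similarly if $i+1=k$ then $\delta_{i+1}=0$ and the inequality becomes $\alpha/(2k^2)\ge \alpha(1-\alpha)/(2k^2)$, which also holds. The only step requiring genuine computation is the concavity-gap identity; everything else is bookkeeping, and I do not anticipate a serious obstacle.
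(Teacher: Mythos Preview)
Your proposal is correct and follows essentially the same approach as the paper: apply Lemma~\ref{lem:basic_analysis} to both indices, compute the concavity gap as $\alpha(1-\alpha)/(2k^2)$ via the identity $\alpha a^2+(1-\alpha)b^2-(\alpha a+(1-\alpha)b)^2=\alpha(1-\alpha)(b-a)^2$, and then check that the $\delta$-slack covers it. The only cosmetic difference is that the paper handles the boundary cases uniformly by lower-bounding the slack by $\min\{\alpha,1-\alpha\}/(2k^2)$ and invoking $\min\{\alpha,1-\alpha\}\ge\alpha(1-\alpha)$, whereas you do an explicit case split on $i=0$ and $i+1=k$; the content is identical.
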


\begin{proof}
If $xk$ is an integer, then the lemma follows directly from Lemma~\ref{lem:basic_analysis} (note that $\alpha = 1$ in this case).
Otherwise, we have $\alpha \cdot i + (1-\alpha)(i+1)= kx$.
Since we assume that $k \geq 2$, at least one of the values $i=\lfloor xk \rfloor$ or $i+1 = \lceil kx \rceil$ must belong to $\{1, \ldots,k-1\}$. Thus, by Lemma~\ref{lem:basic_analysis},
\begin{align*}
	\mspace{117mu}&\mspace{-117mu}\frac{\alpha \cdot \bE[f(A_{i})] + (1 - \alpha) \cdot \bE[f(A_{i+1})]}{f(T)}
	\geq
	\alpha \cdot g(\nicefrac{i}{k}) + (1-\alpha) \cdot g(\nicefrac{(i+1)}{k}) + \min\{\alpha, 1 - \alpha\} / (2k^2)\\
	={} &
	\alpha\left(\frac{i}{k} - \frac{i^2}{2k^2}\right) + (1-\alpha)\left(\frac{i+1}{k} - \frac{(i+1)^2}{2k^2}\right) + \frac{\min\{\alpha, 1 - \alpha\}}{2k^2}\\
	={} &
	\frac{\alpha i + (1 - \alpha)(i+1)}{k} - \frac{(\alpha i + (1 - \alpha) (i+1))^2}{2k^2} - \frac{\alpha(1 - \alpha)}{2k^2} + \frac{\min\{\alpha, 1 - \alpha\}}{2k^2}
\geq{} g(x)
	\enspace,
\end{align*}
where the final inequality follows since $\alpha i + (1 - \alpha)(i+1)=kx$ and $\min\{\alpha, 1 - \alpha\} \geq \alpha(1-\alpha)$.
\end{proof}

The next lemma uses the previous lemma to derive an additional lower bound on the value of the output set of Algorithm~\ref{alg:ResidualRandomGreedy}. 

\begin{lemma} \label{cor:combined_bound}
For every $0 \leq x \leq 1$ and base $T'$ of $\cM$, \[3\bE[f(A_k)] \geq (1 + g(x)) \cdot f(T') + (1 - x) \cdot f(T \mid T')\enspace.\]
\end{lemma}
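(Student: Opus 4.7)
\bigskip

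\noindent\textbf{Proof plan.} The plan is to sandwich. I would first establish the lower bound
\[
  \bE[f(A_k)] + \bE[f(A_k \cup T')] \geq (1 + g(x)) f(T') + (1 - x) f(T \mid T'), \qquad (\star)
\]
and then separately show the upper bound $\bE[f(A_k \cup T')] \leq 2\bE[f(A_k)]$. Adding the two inequalities immediately gives the required bound $3\bE[f(A_k)] \geq (1 + g(x)) f(T') + (1 - x) f(T \mid T')$.

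To establish $(\star)$, I would mimic the interpolation used in the proof of Lemma~\ref{lem:basic_analysis frac}, but applied to Lemma~\ref{lem:A_OPT_development} with $S = T'$. Setting $i = \lfloor xk \rfloor$ and $\alpha = i + 1 - xk$ (so that $\alpha i + (1-\alpha)(i+1) = xk$), the non-decreasing property of $\bE[f(A_j) + f(A_j \cup T_j \cup T')]$ in $j$ yields
\[
  \bE[f(A_k)] + \bE[f(A_k \cup T')] \geq \alpha\, \bE[f(A_i) + f(A_i \cup T_i \cup T')] + (1 - \alpha)\, \bE[f(A_{i+1}) + f(A_{i+1} \cup T_{i+1} \cup T')].
\]
Inside the expectations I would lower-bound the second $f$-term by $f(T_j \cup T')$ via monotonicity, and then invoke Lemma~\ref{lem:sampling} applied to the submodular function $Y \mapsto f(T' \cup Y)$ on ground set $T$; since Observation~\ref{obs:T_i_distribution} guarantees that each element of $T$ lies in $T_j$ with probability $1 - j/k$, this yields $\bE[f(T_j \cup T')] \geq f(T') + (1 - j/k)\, f(T \mid T')$, whose convex combination at $j = i, i+1$ telescopes to $f(T') + (1-x)\, f(T \mid T')$. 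For the remaining $\bE[f(A_j)]$ terms I would apply Lemma~\ref{lem:basic_analysis frac} with $T'$ in the role of the arbitrary analysis base---legal because the algorithm's random output $A_j$ does not depend on which base is named in the analysis setup---to obtain $\alpha\, \bE[f(A_i)] + (1-\alpha)\, \bE[f(A_{i+1})] \geq g(x)\, f(T')$. Summing the two bounds gives exactly the right-hand side of $(\star)$.

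For the upper bound $\bE[f(A_k \cup T')] \leq 2\bE[f(A_k)]$, I would invoke Corollary~\ref{cor:mainineq} with the random set $S := A_k$ and with $T'$ playing the role of the analysis base, obtaining
\[
  2\bE[f(A_k)] = \bE[f(A_k) + f(A_k \cup A_k)] \geq \bE[f(T' \cup A_k)] = \bE[f(A_k \cup T')].
\]
The main subtle point, and what I expect to be the only real obstacle, is justifying this use of the corollary: the ``possibly random set $S$'' appearing in Corollary~\ref{cor:mainineq} is allowed to be correlated with the algorithm's own random choices, because the key submodularity step $f(h_i(u_i) \mid A_{i-1} \cup T_i \cup S) \leq f(h_i(u_i) \mid A_{i-1})$ used in the proof of Lemma~\ref{lem:A_OPT_development} holds pointwise for every realization of $S$, and hence survives any joint distribution of $S$ with the $u_i$'s. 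Once this point is verified, the proof concludes by chaining $3\bE[f(A_k)] = \bE[f(A_k)] + 2\bE[f(A_k)] \geq \bE[f(A_k)] + \bE[f(A_k \cup T')] \geq (1 + g(x)) f(T') + (1-x) f(T \mid T')$.
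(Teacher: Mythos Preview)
Your proposal is correct and follows essentially the same route as the paper: both arguments combine Corollary~\ref{cor:mainineq} applied with $T'$ as the analysis base and $S=A_k$ (yielding $2\bE[f(A_k)]\geq\bE[f(A_k\cup T')]$), Lemma~\ref{lem:sampling} applied to the uniformly random $T_i$ against $T'$, and Lemma~\ref{lem:basic_analysis frac} with $T'$. The only difference is organizational---you first establish $(\star)$ using $S=T'$ and then apply the $2\bE[f(A_k)]\geq\bE[f(A_k\cup T')]$ bound at the end, whereas the paper plugs $S=A_k\cup T'$ into Corollary~\ref{cor:mainineq} and folds that bound into the middle of the chain; note also that, as in the paper, the boundary case $x=1$ should be handled separately since Lemma~\ref{lem:basic_analysis frac} is stated only for $x<1$.
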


\begin{proof}
We begin the proof by showing a lower bound on the gain of Algorithm~\ref{alg:ResidualRandomGreedy} during its last iterations. Observe that for every $i\in \{0, \ldots, k\}$ it holds that
\begin{align*}
	\bE[f(A_k \mid A_i)]
	\geq{} &
	\bE[f(\partT{T}{i} \mid A_k \cup T')]
	=
	\bE[f(\partT{T}{i} \cup A_k \mid T') - f(A_k \mid T')]
	\geq
	\bE[f(\partT{T}{i} \mid T')] - \bE[f(A_k \mid T')]\\
	={} &
	\bE[f(\partT{T}{i} \mid T')] - \bE[f(A_k \cup T')] + f(T')
	\geq
	\bE[f(\partT{T}{i} \mid T')] - 2\bE[f(A_k)] + f(T')
	\enspace,
\end{align*}
where the first inequality follows from the first inequality of Corollary~\ref{cor:mainineq} by plugging $S = A_k \cup T'$ and the second inequality follows from monotonicity. To see why the last inequality holds, observe that $T$ and $T'$ are both arbitrary bases of $\cM$, and thus Corollary~\ref{cor:mainineq} is still true even if we replace $T$ in its guarantee by $T'$. Setting now $S = A_k$ in this modified Corollary~\ref{cor:mainineq} implies the above last inequality.

Now, recall that $\partT{T}{i}$ is a random set that contains every element of $T$ with probability $1 - \nicefrac{i}{k}$. Thus, we can use Lemma~\ref{lem:sampling} to get
\[
	\bE[f(\partT{T}{i} \mid T')]
	\geq
	\nicefrac{i}{k} \cdot f(\varnothing \mid T') + (1 - \nicefrac{i}{k}) \cdot f(T \mid T')
	=
	(1 - \nicefrac{i}{k}) \cdot f(T \mid T')
	\enspace.
\]

For $x=1$ the lemma follows directly by Corollary~\ref{cor:mainineq} (for $S = \varnothing$). For $x < 1$, choosing $i=\lfloor kx \rfloor$ and $\alpha= \lfloor kx + 1 \rfloor - kx$ we get from the previous two inequalities
\begin{align*}
	\alpha \cdot \bE[f(A_k \mid A_{i})] + (1 - \alpha) \cdot \bE[&f(A_k \mid A_{i+1})]\\
	\geq{} &
	\alpha\mleft[(1 - \nicefrac{i}{k}) \cdot f(T \mid T') - 2\bE[f(A_k)] + f(T')\mright]\\
	&+
	(1 - \alpha)\mleft[(1 - \nicefrac{(i+1)}{k}) \cdot f(T \mid T') - 2\bE[f(A_k)] + f(T')\mright]\\
	={} &
	(1 - x) \cdot f(T \mid T') - 2\bE[f(A_k)] + f(T')
	\enspace.
\end{align*}
The lemma now follows 
by adding to the last inequality the inequality $\alpha \cdot \bE[f(A_{i})] + (1 - \alpha) \cdot \bE[f(A_{i+1})] \geq g(x) \cdot f(T')$, which holds by Lemma~\ref{lem:basic_analysis frac}, and rearranging. 
\end{proof}

For convenience, the following proposition summarizes the properties of Algorithm~\ref{alg:ResidualRandomGreedy} that we use in the analysis of our main algorithm.
\begin{proposition} \label{prop:random_greedy_properties}
Given bases $T_1$ and $T_2$ of $\cM$, the output set $A$ of Algorithm~\ref{alg:ResidualRandomGreedy} obeys
\begin{compactenum}
	\item $\bE[f(A)] \geq f(T_1)/2$. \label{item:value_A}
	\item $3\bE[f(A)] \geq (1 + g(x)) \cdot f(T_1) + (1 - x) \cdot f(T_2 \mid T_1)$ for every $x \in [0, 1]$. \label{item:composite_bound}
\end{compactenum}
\end{proposition}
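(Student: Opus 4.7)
The plan is to observe that Proposition~\ref{prop:random_greedy_properties} is essentially a restatement of previously proved results, with the arbitrary base $T$ used throughout the analysis of Algorithm~\ref{alg:ResidualRandomGreedy} instantiated appropriately for each of the two items. Since $T$ was introduced as an arbitrary base of $\cM$ for the purpose of the construction of the sets $\partT{T}{i}$, every statement proved for this $T$ specializes to any specific base we choose.

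For item~\ref{item:value_A}, I would take $T = T_1$ in the analysis and invoke Corollary~\ref{cor:mainineq} with $S = \varnothing$. This immediately yields $\bE[f(A_k) + f(A_k)] \geq f(T_1)$, and dividing by $2$ gives the desired bound $\bE[f(A)] \geq f(T_1)/2$.

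For item~\ref{item:composite_bound}, I would take $T = T_2$ in the analysis and apply Lemma~\ref{cor:combined_bound} with the choice $T' = T_1$. The statement of Lemma~\ref{cor:combined_bound} then becomes exactly
\[
    3\bE[f(A_k)] \geq (1 + g(x)) \cdot f(T_1) + (1 - x) \cdot f(T_2 \mid T_1)
    \enspace,
\]
for every $x \in [0,1]$, which is precisely the claim.

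There is no serious obstacle here: the only point worth spelling out is the observation that the analysis of Algorithm~\ref{alg:ResidualRandomGreedy} was carried out for an arbitrary base $T$, so we are free to plug in $T_1$ (for the first bound) or $T_2$ (for the second bound) into the already-proved results. Thus the proof reduces to two short applications of Corollary~\ref{cor:mainineq} and Lemma~\ref{cor:combined_bound}, respectively.
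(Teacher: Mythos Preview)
Your proposal is correct and matches the paper's proof almost exactly: the paper derives item~\ref{item:composite_bound} from Lemma~\ref{cor:combined_bound} with $T = T_2$, $T' = T_1$, and derives item~\ref{item:value_A} by citing Lemma~\ref{lem:basic_analysis} at $i = k$ with $T = T_1$ (which, for $i=k$, is precisely the consequence of Corollary~\ref{cor:mainineq} with $S = \varnothing$ that you invoke). The only cosmetic difference is which of these two equivalent references is quoted for the first item.
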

\begin{proof}
The first part of the proposition follows from Lemma~\ref{lem:basic_analysis} by setting $T = T_1$ since $g(1) = 1/2$, and the second part follows from Corollary~\ref{cor:combined_bound} by setting $T = T_2$ and $T' = T_1$.
\end{proof} 
\section{Main Algorithm} \label{sec:main_algorithm}

In this section we present our main algorithm, which is given as Algorithm~\ref{alg:divide}. This algorithm invokes the basic algorithms presented in Section~\ref{sec:basic_algorithms}. Note that the invocation of the algorithm {\Split} requires a value for the parameter $p \in [0, 1]$ which is left unspecified by the pseudocode of Algorithm~\ref{alg:divide}, but is determined later in this section.

\begin{algorithm}
\caption{\textsf{Matroid Split and Grow}$(f,\cM)$} \label{alg:divide}
$(A_1,B_1) \gets \text{\Split}(f, \cM, p)$.\\
$A_2 \gets \mbox{\RRGreedy}(f(\cdot \mid A_1), \cM / A_1)$. \\
$B_2 \gets \mbox{\RRGreedy}(f(\cdot \mid B_1), \cM / B_1)$. \\
Return the better solution out of $A=(A_1 \cupdot A_2)$ and $B=(B_1 \cupdot B_2)$.
\end{algorithm}

Algorithm~\ref{alg:divide} constructs two solutions ($A$ and $B$) using a two steps process. In the first step it constructs two disjoint sets $A_1$ and $B_1$ whose union is a base of $\cM$, and in the second step it grows each one of these sets into a base. A central observation used in the analysis of the algorithm is that one way to grow $A_1$ into a base is to add $B_1$ to it and vice versa. A (potentially) different way to grow $A_1$ and $B_1$ into bases is by adding to them appropriate subsets $OPT_A$ and $OPT_B$ of $OPT$. Specifically, we use the partition of $OPT$ into two sets $OPT_A = T_A$ and $OPT_B = T_B$ whose existence is guaranteed by Lemma~\ref{lem:split_partition} when we set $T = OPT$.

The following lemma shows that the two sets that can complement $A_1$ into a base according to the above discussion (\ie, $B_1$ and $OPT_A$) have significant value together---unless the algorithm does very well in its attempt to grow $B_1$ into a base. 

\begin{lemma}\label{lem:b}
$f(B_1 \cup OPT_A) \geq f(OPT) - 2 \bE[f(B \mid B_1)]$.
\end{lemma}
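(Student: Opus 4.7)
The plan is to apply Corollary~\ref{cor:mainineq} to the second invocation of \RRGreedy{} (the one that produces $B_2$ starting from $B_1$) with a cleverly chosen target base and auxiliary set, and then relate the resulting inequality to $f(B_1 \cup OPT_A)$ via a single submodularity step.

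First, I would recall the partition produced by Lemma~\ref{lem:split_partition} applied to $T = OPT$: we get $OPT = OPT_A \cupdot OPT_B$ with $A_1 \cupdot OPT_A$ and $B_1 \cupdot OPT_B$ both bases of $\cM$. The crucial observation is that $OPT_B$ is then a base of the contracted matroid $\cM/B_1$ on which the second \RRGreedy{} call operates. Let $f'(\cdot) \triangleq f(\cdot \mid B_1)$ denote the shifted objective used in that call, so that the output is $B_2$ and $f'(B_2) = f(B \mid B_1)$.

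Next, I would apply Corollary~\ref{cor:mainineq} to this \RRGreedy{} run with $T := OPT_B$ (a base of $\cM/B_1$) and $S := OPT_A$. This yields
\[
\bE\bigl[f'(B_2) + f'(B_2 \cup OPT_A)\bigr] \geq f'(OPT_B \cup OPT_A) = f'(OPT).
\]
Translating back from $f'$ to $f$ and using monotonicity to replace $f(OPT \cup B_1)$ by $f(OPT)$ gives, after cancellation of the $f(B_1)$ terms,
\[
\bE\bigl[f(B \mid B_1)\bigr] + \bE\bigl[f(B \cup OPT_A)\bigr] \geq f(OPT).
\]

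Finally, since $B_1 \subseteq B_1 \cup OPT_A$, submodularity gives $f(B_2 \mid B_1 \cup OPT_A) \leq f(B_2 \mid B_1)$, i.e.\
\[
f(B \cup OPT_A) \leq f(B_1 \cup OPT_A) + f(B \mid B_1).
\]
Substituting this into the previous displayed inequality and noting that $f(B_1 \cup OPT_A)$ is deterministic (both $B_1$ and $OPT_A$ are determined before any random choices are made in the second \RRGreedy{} call) yields
\[
f(B_1 \cup OPT_A) + 2\bE\bigl[f(B \mid B_1)\bigr] \geq f(OPT),
\]
which is the desired bound.

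The only conceptual step is the choice of $T$ and $S$ in Corollary~\ref{cor:mainineq}; once one realizes that $OPT_B$ is the right completion of $B_1$ to a base inside $\cM/B_1$ and that taking $S = OPT_A$ reassembles $OPT$, the remainder is just bookkeeping. I do not expect any substantive obstacle; it is worth checking, however, that $OPT_A$ (which may a priori intersect $B_1$) can still be used as the set $S$ in Corollary~\ref{cor:mainineq}, but the corollary is stated for an arbitrary $S \subseteq \cN$, so this is not an issue.
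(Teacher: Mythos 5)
Your proof is correct, and it rests on the same two ingredients as the paper's, but it routes through them in a different order. The paper starts from $f(B_1 \cup OPT_A)$, uses one submodularity step and one monotonicity step to reach $f(B_1 \cup OPT_A) \geq f(OPT) - f(OPT_B \mid B_1)$, and then bounds $f(OPT_B \mid B_1) \leq 2\bE[f(B \mid B_1)]$ via the first part of Proposition~\ref{prop:random_greedy_properties} (i.e., Corollary~\ref{cor:mainineq} with $S = \varnothing$) applied to the second {\RRGreedy} run with target base $OPT_B$. You instead invoke the full strength of Corollary~\ref{cor:mainineq} with $S = OPT_A$, obtaining $\bE[f(B \mid B_1)] + \bE[f(B \cup OPT_A)] \geq f(OPT)$ directly, and then peel $B_2$ off of $f(B \cup OPT_A)$ with a single submodularity step. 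Both arguments are equally short; the paper's is marginally more self-contained at the level of its stated interface, since it only needs the $\nicefrac{1}{2}$-approximation guarantee packaged in Proposition~\ref{prop:random_greedy_properties}, whereas yours reaches back to the more general corollary with a nonempty $S$ (which is, of course, available and is exactly what the derandomized analogue, Corollary~\ref{cor:main_inequality_deterministic}, also provides). Your closing checks are right: $OPT_B$ is a base of $\cM / B_1$ by Lemma~\ref{lem:split_partition}, the shifted function $f(\cdot \mid B_1)$ is non-negative, monotone and submodular, and $S$ in Corollary~\ref{cor:mainineq} is an arbitrary subset of $\cN$, so a possible intersection of $OPT_A$ with $B_1$ causes no trouble.
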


\begin{proof}
Observe that
\begin{align*}
	f(B_1 \cup OPT_A)
	={}&
	f(OPT_A \mid B_1) + f(B_1)
	\geq
	f(OPT_A \mid B_1 \cup OPT_B) + f(B_1)\\
	\geq{} &
	f(OPT) - f(OPT_B \mid  B_1)
	\geq
	f(OPT) - 2\bE[f(B \mid B_1)]
	\enspace,
\end{align*}
where the first inequality follows by the submodularity of $f$, the second inequality follows by the monotonicity of $f$ and the last inequality follows by invoking the first part of Proposition~\ref{prop:random_greedy_properties} for the second execution of {\RRGreedy} with $T_1 = OPT_B$ (note that $OPT_B$ is indeed a base of the matroid $\cM / B_1$ passed to this execution).
\end{proof}


We can now get a lower bound on a linear combination of the values of the two solutions produced by Algorithm~\ref{alg:divide}.

\begin{lemma}\label{lem:guarantee}
For every $0\leq x\leq 1$,
\[
	3\bE[f(A)]+ 2(1-x) \cdot \bE[f(B)]
	\geq (1+g(x)) \cdot f(OPT) + (2 -x-2g(x)) \cdot f(A_1) + 2(1-x) \cdot f(B_1)
	\enspace.
\]

\end{lemma}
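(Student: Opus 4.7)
The plan is to sum two inequalities---one lower-bounding $3\bE[f(A)]$ via Proposition~\ref{prop:random_greedy_properties}(ii), and one lower-bounding $2(1-x)\bE[f(B)]$ via Lemma~\ref{lem:b}---chosen so that an auxiliary term $f(B_1 \cup OPT_A)$ cancels between them.

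For the $A$ side, I would apply the second part of Proposition~\ref{prop:random_greedy_properties} to the call $\RRGreedy(f(\cdot \mid A_1), \cM/A_1)$ with $T_1 = OPT_A$ and $T_2 = B_1$. Both are bases of $\cM/A_1$: $A_1 \cupdot OPT_A$ is a base of $\cM$ by Lemma~\ref{lem:split_partition} (applied to $T = OPT$), and $A_1 \cupdot B_1$ is a base by the guarantee of the {\Split} algorithm. Unfolding the contracted function produces a lower bound on $3\bE[f(A)] - 3 f(A_1)$ involving $f(A_1 \cup OPT_A)$ and $f(A_1 \cup OPT_A \cup B_1)$. I would then estimate the first via Lemma~\ref{lem:split_partition}, $f(A_1 \cup OPT_A) \geq f(OPT) - f(A_1)$, and the second by monotonicity, $f(A_1 \cup OPT_A \cup B_1) \geq f(B_1 \cup OPT_A)$. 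Routine collection of terms (using that the coefficients $1+g(x)$, $1-x$, and $g(x)+x$ are all non-negative on $x \in [0,1]$) should yield
\[
  3\bE[f(A)] \;\geq\; (2 - x - 2g(x))\, f(A_1) \;+\; (g(x)+x)\, f(OPT) \;+\; (1-x)\, f(B_1 \cup OPT_A).
\]

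For the $B$ side, I would simply scale Lemma~\ref{lem:b} by $(1-x) \geq 0$ and rearrange to
\[
  2(1-x)\bE[f(B)] \;\geq\; (1-x)\, f(OPT) \;+\; 2(1-x)\, f(B_1) \;-\; (1-x)\, f(B_1 \cup OPT_A).
\]
Adding the two inequalities cancels the $\pm(1-x)\, f(B_1 \cup OPT_A)$ terms, and the $f(OPT)$ coefficients combine as $(g(x)+x)+(1-x) = 1 + g(x)$, producing exactly the claimed bound.

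I expect the only real subtlety to be in picking the orientation $(T_1, T_2) = (OPT_A, B_1)$ rather than the mirror choice $(B_1, OPT_A)$. With the mirror choice, Proposition~\ref{prop:random_greedy_properties}(ii) would produce the term $(1-x)[f(A_1 \cup B_1 \cup OPT_A) - f(A_1 \cup B_1)]$, which by submodularity is at most $(1-x)[f(B_1 \cup OPT_A) - f(B_1)]$---the wrong direction for cancellation against Lemma~\ref{lem:b}. The choice $T_1 = OPT_A$ avoids this issue, since the analogous quantity is used only through plain monotonicity, which does go in the required direction.
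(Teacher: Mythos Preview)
Your proposal is correct and follows essentially the same route as the paper: apply Proposition~\ref{prop:random_greedy_properties}(ii) to the first {\RRGreedy} call with $T_1 = OPT_A$ and $T_2 = B_1$, simplify using Lemma~\ref{lem:split_partition} and monotonicity, and then invoke Lemma~\ref{lem:b}. The only cosmetic difference is that the paper substitutes Lemma~\ref{lem:b} directly into the $(1-x)\,f(B_1\cup OPT_A)$ term rather than writing it as a second inequality to be added, but this is algebraically equivalent to what you do.
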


\begin{proof}
Consider the first execution of {\RRGreedy} invoked by Algorithm~\ref{alg:divide}. Since $OPT_A$ and $B_1$ are two bases of the matroid passed to this execution, using the second part of Proposition~\ref{prop:random_greedy_properties} with $T_1 = OPT_A$, $T_2 = B_1$, we get
\begin{align*}
	3\bE[f(A_2 \mid A_1)]
	\geq{} &
	(1 + g(x)) \cdot f(OPT_A \mid A_1) + (1 - x) \cdot f(B_1 \mid OPT_A \cup A_1)\\
	={} &
	(x + g(x)) \cdot f(A_1 \cup OPT_A) - (1 + g(x)) \cdot f(A_1) + (1 - x) \cdot f(A_1 \cup B_1 \cup OPT_A)
	\enspace.
\end{align*}

Let us now present lower bounds for two of the terms on the right hand side of the last inequality. The term $f(OPT_A \cup A_1)$ is at least $f(OPT) - f(A_1)$ by Lemma~\ref{lem:split_partition} and the definition of $OPT_A$. Additionally, by the monotonicity of $f$, the term $f(A_1 \cup B_1 \cup OPT_A)$ is at least $f(B_1 \cup OPT_A)$, and this last expression can be lower bounded by Lemma~\ref{lem:b}. Plugging these lower bounds into the last inequality, we get
\begin{align*}
	3\bE[f(A_2&{} \mid A_1)]\\
	\geq{} &
	(x + g(x)) \cdot [f(OPT) - f(A_1)] - (1 + g(x)) \cdot f(A_1) + (1 - x) \cdot \bE[f(OPT) - 2\cdot f(B \mid B_1)]\\
	={} &
	(1 + g(x)) \cdot f(OPT) - (1 + x + 2g(x)) \cdot f(A_1) - 2(1-x) \cdot \bE[f(B \mid B_1)]
	\enspace.
\end{align*}

Rearranging the terms, we get the desired.
\end{proof}

To get a lower bound on the competitive ratio of Algorithm~\ref{alg:divide} we need to plug into the guarantee of Lemma~\ref{lem:guarantee} a value for $x$ and to lower bound the terms that include $f(A_1)$ and $f(B_1)$ in this guarantee. This is done in the proof of the next proposition.

\begin{proposition}
The approximation ratio of Algorithm~\ref{alg:divide} is at least $0.5008$.
\end{proposition}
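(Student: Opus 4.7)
The plan is to combine Lemma~\ref{lem:guarantee} with Lemma~\ref{lem:split}, both applied with $T = OPT$ (we may assume $OPT$ is a base of $\cM$ by monotonicity of $f$), and then pass to the maximum via the elementary inequality $\bE[\max\{f(A),f(B)\}] \geq (w_A \bE[f(A)]+w_B \bE[f(B)])/(w_A+w_B)$ which holds for any nonnegative weights with positive sum. The natural choice is $w_A = 3$ and $w_B = 2(1-x)$, matching the left-hand side of Lemma~\ref{lem:guarantee}.

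Using $g(x)=x-x^2/2$, the coefficients of $f(A_1)$ and $f(B_1)$ on the right-hand side of Lemma~\ref{lem:guarantee} simplify to $2-x-2g(x)=(1-x)(2-x)$ and $2(1-x)$. To eliminate them via Lemma~\ref{lem:split}, I couple $\beta$ to $x$ by requiring $\beta : (1-\beta)$ to match these coefficients, giving
\[
\beta \;=\; \frac{2-x}{4-x},
\]
so that $(2-x-2g(x)) f(A_1) + 2(1-x) f(B_1) = (1-x)(4-x)\bigl(\beta f(A_1)+(1-\beta)f(B_1)\bigr)$. As $x$ ranges over $[0,1]$, $\beta$ decreases continuously from $1/2$ to $1/3$, so $\beta \in [\tfrac15,\tfrac45]$ throughout, and Lemma~\ref{lem:split} applies. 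We feed the corresponding $p=\beta/(\beta+\sqrt{\beta(1-\beta)})$ into the call to \Split\ made by Algorithm~\ref{alg:divide}.

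Substituting the bound $\beta f(A_1)+(1-\beta) f(B_1) \geq \tfrac{2}{3}(1-\sqrt{\beta(1-\beta)})\, f(OPT)$ from Lemma~\ref{lem:split} into Lemma~\ref{lem:guarantee} and dividing by $w_A+w_B=5-2x$ yields
\[
\bE[\max\{f(A),f(B)\}] \;\geq\; \frac{1+g(x) + \tfrac{2}{3}(1-x)(4-x)\bigl(1-\sqrt{\beta(1-\beta)}\bigr)}{5-2x}\cdot f(OPT),
\]
with $\beta=(2-x)/(4-x)$. What remains is a one-dimensional optimization over $x\in(0,1)$, with the ratio guaranteed to be $\tfrac12$ at the endpoints $x=0$ and $x=1$ and strictly larger in between.

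The main (and essentially only) obstacle is computational: the improvement over $\tfrac12$ is only in the fourth decimal, so one must evaluate the expression carefully. There is no clean closed-form maximizer, and the proof reduces to plugging in a concrete value of $x$ near the numerical optimum, fixing $\beta$ and $p$ accordingly, and verifying the resulting inequality. For instance, choosing $x$ near $0.92$ gives $g(x)\approx 0.497$, $\beta\approx 0.351$, $\sqrt{\beta(1-\beta)}\approx 0.477$, a numerator of roughly $1.583$ and a denominator of $3.16$, yielding a ratio of approximately $0.5008$, as claimed.
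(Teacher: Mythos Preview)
Your proof is correct and essentially identical to the paper's: your $\beta = (2-x)/(4-x)$ is algebraically the same as the paper's $\beta = (2-x-2g(x))/(4-3x-2g(x))$ (since $2-x-2g(x)=(1-x)(2-x)$ and $4-3x-2g(x)=(1-x)(4-x)$), and the paper likewise finishes by plugging in a specific value (namely $x=0.9$, yielding $\beta\approx 0.35$) and verifying the numerical bound. One minor inaccuracy in a side remark: at $x=0$ the ratio evaluates to $7/15<\tfrac12$, not $\tfrac12$, but this does not affect the argument.
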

\begin{proof}
Let \[\beta = \frac{2-x-2g(x)}{(2 -x-2g(x)) + 2(1-x)} = \frac{2 -x-2g(x)}{4 -3x-2g(x)} \enspace.\] Plugging this value of $\beta$ into the guarantee of Lemma~\ref{lem:split} for $T = OPT$, and choosing the value of $p$ accordingly, we get
\[
	(2 -x-2g(x)) \cdot f(A_1) + 2(1-x)f(B_1)
	\geq
	(4 -3x-2g(x)) \cdot w(\beta) \cdot f(OPT)
	\enspace,
\]
where $w(\beta) \triangleq \frac{2}{3}\left(1- \sqrt{(1-\beta)\beta}\right)$.
Combining this inequality with the guarantee of Lemma~\ref{lem:guarantee}, we get
\begin{align*}
	\max\mleft\{\bE[f(A)],\bE[f(B)]\mright\}\geq{}&  \frac{3\bE[f(A)]+ 2(1-x) \cdot \bE[f(B)]}{5-2x}\\
	\geq{}& \frac{1+g(x) + (4 -3x-2g(x)) \cdot w(\beta)}{5-2x} \cdot f(OPT)
	\enspace.
\end{align*}

Setting $x=0.9$, the coefficient of $f(OPT)$ in the last inequality becomes larger than $0.5008$. Moreover, it can be verified that for this value of $x$, $\beta \approx 0.35$ which is in the range $[\frac{1}{5}, \frac{4}{5}]$, as required by Lemma~\ref{lem:split}.
%
%
%
%
\end{proof}

\section{Derandomizing Algorithm~\ref{alg:ResidualRandomGreedy}} \label{sec:derandomizing_greedy}

In this section, we describe a deterministic algorithm (given as Algorithm~\ref{alg:deterministic_greedy}) whose output obeys (roughly) the same properties guaranteed by Proposition~\ref{prop:random_greedy_properties} for Algorithm~\ref{alg:ResidualRandomGreedy}, and thus, using it in Algorithm~\ref{alg:divide} instead of Algorithm~\ref{alg:ResidualRandomGreedy} does not affect the approximation guarantee of the former algorithm. 
Note that Algorithm~\ref{alg:deterministic_greedy} gets a parameter $B$ that Algorithm~\ref{alg:ResidualRandomGreedy} does not get. We assume that this parameter is a base of $\cM$.


\begin{algorithm}[th]
\caption{\textsf{Residual Parallel Greedy -- \RPGreedy}$(f,\cM, B)$} \label{alg:deterministic_greedy}
\DontPrintSemicolon
Initialize: $A^j_0  \leftarrow \varnothing$ and $B^j_0 \leftarrow B$ for every $j = 1, \ldots, k$. \\
\For{$i = 1$ \KwTo $k$}
{
For every $j=1, \ldots, k$, let $M^j_i$ be a base of $\cM / A^j_{i-1}$ maximizing $\sum_{u \in M^j_i} f(u \mid A^j_{i - 1})$.\\
Construct a weighted bipartite (multi-)graph $G_i = (V_{L}, V_{R}, E, w)$ as follows.\\
{\begin{itemize}
	\item $V_{L} \triangleq B$ and $V_{R}\triangleq\{1,\ldots, k\}$.
	\item For each $u\in  M^j_i$ and $v \in B$, add an edge $e=(v,j)$ with weight $w_e = f(u \mid A^j_{i-1})$ if
\begin{itemize}
\item $v \in B^j_{i-1}$, and $(A^j_{i-1} + u) \cupdot (B^j_{i-1} - v)$ is a base of $\cM$.
\item $f(u \mid A^j_{i-1}) \geq f(v \mid A^j_{i-1})$.
\end{itemize}
\end{itemize}}
Find a maximum weight perfect matching $R_i$ in $G_i$.\label{line:perfect_matching}\\
	\For{every $j = 1$ \KwTo $k$}
	{
		Let $e=(v_i^j,j)$ be the single edge in the matching $R_i$ which hits $j$, and let $u_i^j\in M_i^j$ be the element that corresponds to this edge.\\
		Set $A^j_i \gets A^j_{i -1} + u_i^j$ and $B^j_i \gets B^j_{i-1} - v_i^j$.\\
	}
}
\Return{the best set out of $A^1_{k}, A^2_{k}, \dotsc, A^k_k$}.
\end{algorithm}


We begin the analysis of Algorithm~\ref{alg:deterministic_greedy} with the following lemma which guarantees, in particular, that the algorithm can always find a perfect matching in $G_i$.
\begin{lemma} \label{lem:basic_deterministic_properties}
The algorithm satisfies the following properties.
\begin{itemize}
\item For every $i=0, \ldots, k$ and $j=1, \ldots, k$, $A_i^j \cupdot B^j_i$ is a base of $\cM$ and every element $u \in B$ appears in exactly $k - i$ out of the sets $B^1_i, B^2_i, \dotsc, B^k_i$.
\item For every $i=1, \ldots, k$, $G_i$ has a perfect matching of weight at least
\[
	\frac{1}{k - i + 1} \cdot \sum_{j = 1}^k \sum_{u \in M_i^j} f(u \mid A_{i-1}^j)
	\enspace.
\]
\end{itemize}
\end{lemma}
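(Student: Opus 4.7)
I would prove both parts simultaneously by induction on $i$, deducing part~2 at step $i$ from part~1 at step $i-1$, and then part~1 at step $i$ from the perfect matching $R_i$ produced at step $i$. The base case $i=0$ of part~1 is immediate from the initialization: $A_0^j \cupdot B_0^j = B$ is a base of $\cM$, and every $u \in B$ appears in all $k$ of the sets $B^j_0$.

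For part~2 at step $i$, I would apply Lemma~\ref{lem:matroid_matching_greedy} inside the contracted matroid $\cM / A^j_{i-1}$ for each $j$. By the part~1 hypothesis at step $i-1$, $B^j_{i-1}$ is a base of $\cM / A^j_{i-1}$, and $M_i^j$ is by construction a maximum-weight base there with respect to the weight $w(u) = f(u \mid A^j_{i-1})$. The lemma then yields a bijection $h_j : M_i^j \to B^j_{i-1}$ whose two output properties coincide exactly with the two conditions imposed on edges of $G_i$; thus each pair $(h_j(u), j)$ with $u \in M_i^j$ is an edge of $G_i$ of weight $f(u \mid A^j_{i-1})$. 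To lower-bound the maximum perfect matching weight in $G_i$, the plan is to exhibit a fractional perfect matching by placing weight $\frac{1}{k-i+1}$ on each such edge. The fractional degree at $j \in V_R$ is $|M_i^j|/(k-i+1) = 1$, while the fractional degree at $v \in V_L = B$ is $\frac{1}{k-i+1}$ times the number of indices $j$ with $v \in B^j_{i-1}$, which by the counting statement of part~1 at step $i-1$ equals $k - i + 1$, giving fractional degree $1$ as well. Hence this is a fractional perfect matching of total weight $\frac{1}{k-i+1}\sum_{j=1}^k \sum_{u \in M_i^j} f(u \mid A^j_{i-1})$, and integrality of the bipartite matching polytope then delivers an integer perfect matching of at least this weight.

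For the inductive step of part~1, once $R_i$ has been produced each edge $(v_i^j, j) \in R_i$ selected at index $j$, labeled by $u_i^j \in M_i^j$, gives $A^j_i \cupdot B^j_i = (A^j_{i-1} + u_i^j) \cupdot (B^j_{i-1} - v_i^j)$, which is a base of $\cM$ directly by the definition of the edges of $G_i$. For the counting, since $R_i$ is a perfect matching on $V_L = B$, each $v \in B$ is incident to exactly one edge of $R_i$, so $v$ is removed from exactly one of the sets $B^j_{i-1}$; the count therefore drops from $k-i+1$ to $k-i$. The main obstacle I anticipate is purely bookkeeping: lining up the counting statement of part~1 at step $i-1$ with the fractional degrees at step $i$ so that the fractional matching is doubly stochastic, and matching the two guarantees produced by Lemma~\ref{lem:matroid_matching_greedy} to the two conditions required by the edge set of $G_i$.
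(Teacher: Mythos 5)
Your proposal is correct and follows essentially the same route as the paper's proof: induction on $i$, an application of Lemma~\ref{lem:matroid_matching_greedy} in each contracted matroid $\cM / A^j_{i-1}$ to produce the bijections whose images are edges of $G_i$, the fractional perfect matching with weight $\tfrac{1}{k-i+1}$ on each such edge (with the degree count at $V_L$ justified by the counting claim of part~1 at step $i-1$), and integrality of the bipartite matching polytope to extract an integral perfect matching of at least that weight.
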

\begin{proof}
We prove the lemma by induction on $i$. For $i = 0$ the lemma is trivial since, for every $1 \leq j \leq k$, $A^j_0 = \varnothing$ and $B^j_0 = B$ is a base of $\cM$ by definition. Thus, it remains to prove the lemma for $1 \leq i \leq k$ under the assumption that it holds for $i - 1$.

For every $1 \leq j \leq k$, $M^j_i$ and $B^j_{i - 1}$ are both bases of $\cM / A^j_{i-1}$, and $M^j_i$ maximizes the linear function $w(u) = f(u \mid A^j_i)$ among all such bases.
Thus, by Lemma~\ref{lem:matroid_matching_greedy}, there exists a bijective function $h^j\colon M^j_i \to B^j_{i - 1}$ such that $(B^j_{i - 1} - h^j(u)) + u$ is a base of $\cM / A^j_{i - 1}$ and $f(u \mid A^j_{i-1}) \geq f(h^j(u) \mid A^j_{i-1})$ for every $u \in M^j_i$. 
For every $u \in M^j_i$, let $e(j,u,h^j(u))$ be the edge between $j$ and $v=h^j(u)$ that corresponds to $u$ in $G_i$---since $f(u) \geq f(h^j(u))$, this edge exists. Observe that 
one possible fractional matching for $G_i$ is to assign a fraction of $1 / (k - i + 1)$ for every edge of the set
\[
	R = \{e(j, u, h^j(u)) \mid 1 \leq j \leq k \text{ and } u \in M^j_i\}
	\enspace.
\]
Since $M^j_i$ is a set of size $k - i + 1$, $R$ contains $k - i + 1$ edges hitting the right side vertex $j$ of $G$ for every $1 \leq j \leq k$. Moreover, since every element of $B$ appears in exactly $k - i + 1$ of the sets $B^1_{i - 1}, B^2_{i - 1}, \dotsc, B^k_{i - 1}$ by the induction hypothesis, the number of $R$ edges hitting every left side vertex $u \in B$ of $G_i$ is also $k - i + 1$. Thus, we get that $R$ is a perfect fractional matching of $G_i$. Since the matching polytope is integral, there must be an integral perfect matching $R_i$ in $G_i$ whose weight is at least the weight of $R$, \ie,
\[
	\sum_{e(j, u, v) \in R_i} f(u \mid A^j_{i-1})
	\geq
	\frac{1}{k - i + 1} \cdot \sum_{j = 1}^k \sum_{u \in M_i^j} f(u \mid A^j_{i-1})
	\enspace.
\]

To complete the proof of the lemma, it remains to observe two things. First, note that since $R_i$ is a perfect matching, exactly one edge of $R_i$ hits every left side vertex $v \in B$ of $G_i$, and thus the number of appearances of $v$ in $B^1_i, B^2_i, \dotsc, B^k_i$ is smaller than the number of its appearances in $B^1_{i - 1}, B^2_{i - 1}, \dotsc, B^k_{i - 1}$ by exactly $1$ (if the edge $e(j, u, v)$ is the single edge of $R_i$ hitting $v$, then $B^j_{i - 1}$ contains $v$, but $B^j_i$ does not). The final observation is that, for every $j = 1,\dotsc, k$, $A^j_i \cupdot B^j_i = (A^j_{i-1} + u_i^j) \cupdot (B^j_{i-1} - v_i^j)$ is a base of $\cM$ because $e = (v_i^j, j)$ is an edge of $G_i$.
\end{proof}

Next, we fix a base $T$ of $\cM$, and construct additional sets $T_i^j$ such that $A_i^j \cupdot T_i^j$ is a base of $\cM$ for every $0 \leq i \leq k$ and $1 \leq j \leq k$. 
For $i = 0$ we define $T_0^j = T$ for every $1 \leq j \leq k$. Consider now some $1 \leq i,j \leq k$ and let us construct $T_i^j$ assuming $T_{i-1}^j$ is already constructed. $M_i^j$ and $T_{i-1}^j$ are both bases of $\cM / A_{i-1}^j$, and $M_i^j$ maximizes the linear function $w(u) = f(u \mid A^j_{i-1})$. Thus, by Lemma~\ref{lem:matroid_matching_greedy}, there exists a bijection $h_i^j\colon M_i^j \to T_{i-1}^j$ such that $(T_{i-1}^j - h(u)) + u$ is a base of $\cM / A_{i-1}$ and $f(u \mid A_{i-1}^j) \geq f(h(u) \mid A_{i-1}^j)$ for every $u \in M_i^j$. Since $A_i^j = A_{i - 1}^{j}+u_i^j$, by setting $T_i^j = T_{i-1}^j - h_i^j(u)$ we get that $A_i^j \cupdot T_i^j$ is a base of $\cM$ as promised.


The following lemma is analogous to Lemma~\ref{lem:A_OPT_development}, and it implies Corollary~\ref{cor:main_inequality_deterministic}, which is analogous to Corollary~\ref{cor:mainineq}.
\begin{lemma} \label{lem:A_OPT_development_deterministic}
For every $0 \leq i \leq k$, $1 \leq j \leq k$ and set $S \subseteq \cN$,
\[
	f(A_i^j) + f(A_i^j \cup T_i^j \cup S)
	\geq
	f(A_{i - 1}^j) + f(A_{i - 1}^j \cup T_{i-1}^j \cup S)
\]
and
\[
	f(A_i^j) + f(A_i^j \cup B_i^j \cup S)
	\geq
	f(A_{i - 1}^j) + f(A_{i - 1}^j \cup B_{i-1}^j \cup S)
	\enspace.
\]
\end{lemma}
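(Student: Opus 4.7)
The plan is to imitate the proof of Lemma~\ref{lem:A_OPT_development}, replacing its averaging argument (which exploited the uniform random choice of $u_i$ from $M_i$) with the pointwise weight-domination guarantees supplied by Lemma~\ref{lem:matroid_matching_greedy} (used to define $T_i^j$) and by the edges of the matching $R_i$ computed by {\RPGreedy}. Both inequalities will follow from a single template: bound $f(A_i^j) - f(A_{i-1}^j) = f(u_i^j \mid A_{i-1}^j)$ from below by $f(w \mid A_{i-1}^j)$ for a suitable element $w$ that ``leaves'' the companion base, then apply submodularity to push the conditioning up, and finally use monotonicity to replace $A_{i-1}^j$ by $A_i^j$.

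For the first inequality, I would take $w = h_i^j(u_i^j) \in T_{i-1}^j$, where $h_i^j\colon M_i^j \to T_{i-1}^j$ is the bijection from Lemma~\ref{lem:matroid_matching_greedy} that was used to define $T_i^j = T_{i-1}^j - h_i^j(u_i^j)$. The lemma guarantees $f(u_i^j \mid A_{i-1}^j) \geq f(h_i^j(u_i^j) \mid A_{i-1}^j)$, and submodularity then yields $f(h_i^j(u_i^j) \mid A_{i-1}^j) \geq f(h_i^j(u_i^j) \mid A_{i-1}^j \cup T_i^j \cup S)$. Using $T_{i-1}^j = T_i^j + h_i^j(u_i^j)$, the right-hand side equals $f(A_{i-1}^j \cup T_{i-1}^j \cup S) - f(A_{i-1}^j \cup T_i^j \cup S)$, which by monotonicity is at least $f(A_{i-1}^j \cup T_{i-1}^j \cup S) - f(A_i^j \cup T_i^j \cup S)$. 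Rearranging completes the first inequality.

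For the second inequality, the role of $w$ is played by $v_i^j$: by the definition of the edge set of $G_i$, the matching $R_i$ can only include $(v_i^j, j)$ when $v_i^j \in B_{i-1}^j$, $A_i^j \cupdot B_i^j = (A_{i-1}^j + u_i^j)\cupdot (B_{i-1}^j - v_i^j)$ is a base of $\cM$, and $f(u_i^j \mid A_{i-1}^j) \geq f(v_i^j \mid A_{i-1}^j)$. This third property is exactly the deterministic analogue of what $h_i^j$ provides, so the identical submodularity-plus-monotonicity chain (with $v_i^j$ in place of $h_i^j(u_i^j)$ and $B_i^j$ in place of $T_i^j$) gives the desired bound. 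I do not anticipate any real obstacle; the only minor subtlety is the submodularity step $f(v_i^j \mid A_{i-1}^j) \geq f(v_i^j \mid A_{i-1}^j \cup B_i^j \cup S)$, which in the strict sense requires $v_i^j \notin A_{i-1}^j \cup B_i^j$---this holds because $v_i^j \in B_{i-1}^j$, $A_{i-1}^j \cap B_{i-1}^j = \varnothing$ by Lemma~\ref{lem:basic_deterministic_properties}, and $v_i^j \notin B_i^j$ by construction---while the case $v_i^j \in S$ makes the right-hand side zero and is handled by monotonicity.
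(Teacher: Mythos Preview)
Your proposal is correct and follows exactly the same approach as the paper: for each inequality you bound $f(u_i^j\mid A_{i-1}^j)$ below by $f(w\mid A_{i-1}^j)$ for the appropriate ``departing'' element ($w=h_i^j(u_i^j)$ via Lemma~\ref{lem:matroid_matching_greedy} for the first, $w=v_i^j$ via the edge-constraint of $G_i$ for the second), then apply submodularity and monotonicity. Your extra care in justifying the submodularity step when $v_i^j$ might lie in $S$ is a nice touch that the paper handles only by invoking ``monotonicity and submodularity'' jointly.
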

\begin{proof}
Observe that
\begin{align*}
	f(A_i^j) - f(A_{i-1}^j)
	&=
	f(u_i^j \mid A_{i-1}^j)
	\geq
	f(h_i^j(u^i_j) \mid A_{i-1}^j)
	\geq
	f(h_i^j(u^i_j) \mid A_{i-1}^j \cup T_i^j \cup S)\\
	={} &
	f(A_{i - 1}^j \cup T_{i - 1}^j \cup S) - f(A_{i-1}^j \cup T_i^j \cup S)
	\geq
	f(A_{i - 1}^j \cup T_{i - 1}^j \cup S) - f(A_i^j \cup T_i^j \cup S)
	\enspace,
\end{align*}
where the first inequality holds due to the definition of $h_i^j$, the second follows from the monotonicity and submoduarlity of $f$ and the last from $f$'s monotonicity. Similarly,
\begin{align*}
	f(A_i^j) - f(A_{i-1}^j)
	&=
	f(u_i^j \mid A_{i-1}^j)
	\geq
	f(v_i^j \mid A_{i-1}^j)
	\geq
	f(v_i^j \mid A_{i-1}^j \cup B_i^j \cup S)\\
	={} &
	f(A_{i - 1}^j \cup B_{i - 1}^j \cup S) - f(A_{i-1}^j \cup B_i^j \cup S)
	\geq
	f(A_{i - 1}^j \cup B_{i - 1}^j \cup S) - f(A_i^j \cup B_i^j \cup S)
	\enspace,
\end{align*}
where the first inequality holds since for every edge $e = (v, j)$ of $G_i$ the element $u \in M_i^j$ corresponding to this edge obeys $f(u \mid A_{i-1}^j) \geq f(v \mid A_{i-1}^j)$.
\end{proof}
\begin{corollary} \label{cor:main_inequality_deterministic}
For every $0 \leq i \leq k$, $1 \leq j \leq k$ and set $S \subseteq \cN$,
\begin{equation} \label{eq:main_inequality_T}
	f(A^j_k) + f(A^j_k \cup S)
	\geq
	f(A_i^j) + f(A_i^j \cup T_i^j \cup S)
	\geq
	f(T \cup S)
\end{equation}
and
\begin{equation} \label{eq:main_inequality_B}
	f(A^j_k) + f(A^j_k \cup S)
	\geq
	f(A_i^j) + f(A_i^j \cup B_i^j \cup S)
	\geq
	f(B \cup S)
	\enspace.
\end{equation}
\end{corollary}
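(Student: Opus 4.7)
The plan is to imitate the proof of Corollary~\ref{cor:mainineq} almost verbatim, but deterministically (no expectations) and using the newly established Lemma~\ref{lem:A_OPT_development_deterministic} in place of Lemma~\ref{lem:A_OPT_development}. Since Lemma~\ref{lem:A_OPT_development_deterministic} supplies two separate one-step monotonicity statements---one involving $T_i^j$ and one involving $B_i^j$---I treat the two displayed inequalities of the corollary independently, using the first part of the lemma for \eqref{eq:main_inequality_T} and the second part for \eqref{eq:main_inequality_B}.

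For \eqref{eq:main_inequality_T}, I will first note that Lemma~\ref{lem:A_OPT_development_deterministic} says $f(A_i^j) + f(A_i^j \cup T_i^j \cup S)$ is non-decreasing in $i$ (with $j$ and $S$ held fixed). Hence it is sandwiched between its values at $i = 0$ and $i = k$. At $i = 0$ I substitute $A_0^j = \varnothing$ and $T_0^j = T$ and drop $f(\varnothing) \geq 0$, obtaining the right-hand bound $f(T \cup S)$. At $i = k$ I use that $A_k^j$ has exactly $k$ elements (one added per iteration) and that $A_k^j \cupdot T_k^j$ is a base of $\cM$ by the construction preceding Lemma~\ref{lem:A_OPT_development_deterministic}, which forces $T_k^j = \varnothing$; this turns $f(A_k^j) + f(A_k^j \cup T_k^j \cup S)$ into $f(A_k^j) + f(A_k^j \cup S)$, yielding the left-hand bound.

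For \eqref{eq:main_inequality_B}, the argument is entirely parallel. The second part of Lemma~\ref{lem:A_OPT_development_deterministic} says $f(A_i^j) + f(A_i^j \cup B_i^j \cup S)$ is non-decreasing in $i$. At $i = 0$ I substitute $A_0^j = \varnothing$ and $B_0^j = B$ (from the initialization in Algorithm~\ref{alg:deterministic_greedy}) and use $f(\varnothing) \geq 0$ to obtain $f(B \cup S)$. At $i = k$, Lemma~\ref{lem:basic_deterministic_properties} guarantees $A_k^j \cupdot B_k^j$ is a base, and combined with $|A_k^j| = k$ this forces $B_k^j = \varnothing$; so the middle expression collapses to $f(A_k^j) + f(A_k^j \cup S)$.

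There is no real obstacle here---this is a bookkeeping corollary that telescopes the one-step inequalities of Lemma~\ref{lem:A_OPT_development_deterministic} and plugs in the boundary values $i = 0$ and $i = k$. The only subtlety worth double-checking is that the deterministic construction of $T_i^j$ (given between Lemma~\ref{lem:basic_deterministic_properties} and Lemma~\ref{lem:A_OPT_development_deterministic}) really does ensure $A_i^j \cupdot T_i^j$ remains a base of $\cM$ at every step, so that $T_k^j = \varnothing$ at the terminal step; this is immediate from the construction via Lemma~\ref{lem:matroid_matching_greedy}, so no additional work is needed.
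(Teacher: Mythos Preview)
Your proposal is correct and follows essentially the same approach as the paper's own proof: telescope the one-step monotonicity from Lemma~\ref{lem:A_OPT_development_deterministic}, then plug in the boundary data $A_0^j=\varnothing$, $T_0^j=T$, $B_0^j=B$, $T_k^j=B_k^j=\varnothing$ together with $f(\varnothing)\geq 0$. The only cosmetic difference is that you spell out why $B_k^j=\varnothing$ via Lemma~\ref{lem:basic_deterministic_properties}, whereas the paper leaves the second part to ``a similar way.''
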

\begin{proof}
The first part of Lemma~\ref{lem:A_OPT_development_deterministic} shows that $f(A^j_i) + f(A^j_i \cup T_i^j \cup S)$ is a non-decreasing function of $i$, and thus
\[
	f(A^j_k) + f(A^j_k \cup T_k^j \cup S)
	\geq
	f(A^j_i) + f(A^j_i \cup T_i^j \cup S)
	\geq
	f(A^j_0) + f(A^j_0 \cup T_0^j \cup S)
	\enspace.
\]
The first part of the corollary now follows by recalling that $A^j_0 = \varnothing$ and $T_0^j = T$ by definition, observing that $f(A^j_0) \geq 0$ since $f$ is non-negative and oberving that $T^j_k = \varnothing$ since $A^k_j$ and $A^k_j \cupdot T^j_k$ are both bases of $\cM$.

The second part of the corollary follows from the second part of Lemma~\ref{lem:A_OPT_development_deterministic} in a similar way.
\end{proof}

By setting $S = \varnothing$, the last corollary gives us a lower bound of $f(T)/2$ on $k^{-1} \cdot \sum_{j = 1}^k f(A_k^j)$. An alternative lower bound on this quantity is given by the next lemma, which is analogous to Lemma~\ref{lem:basic_analysis}.
\begin{lemma} \label{lem:basic_analysis_deterministic}
For every $0 \leq i \leq k$, $k^{-1} \cdot \sum_{j = 1}^k f(A_i^j) \geq [g(\nicefrac{i}{k}) + \delta] \cdot f(T)$, where $\delta = 1/(2k^2)$ for $0 < i < k$ and $0$ otherwise. \end{lemma}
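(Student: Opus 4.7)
The plan is to mirror the proof of Lemma~\ref{lem:basic_analysis} almost verbatim, replacing the expectation over random choices with the average over the $k$ parallel threads $j = 1, \ldots, k$. Define $\bar{f}_i \triangleq k^{-1} \sum_{j=1}^k f(A_i^j)$; the goal is to derive the recurrence
\[
	\bar{f}_i \geq \bar{f}_{i-1} + \frac{f(T) - 2 \bar{f}_{i-1}}{k - i + 1},
\]
which is precisely the same inequality that drives the analysis of Lemma~\ref{lem:basic_analysis}, and then inherit the algebraic computation (the convex combination with $g'(x) = (1 - 2g(x))/(1-x)$ trick) from that lemma.

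First I would handle the boundary cases: for $i = 0$ the claim is trivial since $A_0^j = \varnothing$ and $g(0) = 0$, while for $i = k$ the bound $\bar{f}_k \geq f(T)/2 = g(1) f(T)$ follows directly from Corollary~\ref{cor:main_inequality_deterministic} (equation \eqref{eq:main_inequality_T} with $S = \varnothing$) applied to each thread $j$. For $1 \leq i \leq k - 1$ I would proceed by induction on $i$.

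Second, for the inductive step, sum the guarantee of Lemma~\ref{lem:basic_deterministic_properties} to get
\[
	\sum_{j=1}^k [f(A_i^j) - f(A_{i-1}^j)] \geq \frac{1}{k-i+1} \sum_{j=1}^k \sum_{u \in M_i^j} f(u \mid A_{i-1}^j).
\]
Because $M_i^j$ is a maximum weight base of $\cM / A_{i-1}^j$ under the weights $w(u) = f(u \mid A_{i-1}^j)$ and $T_{i-1}^j$ is another base of this matroid, $\sum_{u \in M_i^j} f(u \mid A_{i-1}^j) \geq \sum_{u \in T_{i-1}^j} f(u \mid A_{i-1}^j)$, and this last sum is at least $f(T_{i-1}^j \mid A_{i-1}^j)$ by submodularity. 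By the second inequality of Corollary~\ref{cor:main_inequality_deterministic} (applied at index $i - 1$ with $S = \varnothing$), $f(T_{i-1}^j \mid A_{i-1}^j) \geq f(T) - 2 f(A_{i-1}^j)$. Averaging the resulting chain of inequalities over $j$ and dividing by $k$ yields the recurrence above.

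Finally, plug in the induction hypothesis $\bar{f}_{i-1} \geq g((i-1)/k) f(T)$ and carry out the same computation as in Lemma~\ref{lem:basic_analysis}: this rewrites as $\bar{f}_i / f(T) \geq g((i-1)/k) + g'((i-1)/k)/k$, and the integral identity $g(i/k) = g((i-1)/k) + \int_{(i-1)/k}^{i/k} g'(x)\,dx$ combined with $\int_{(i-1)/k}^{i/k} (x - (i-1)/k)\,dx = 1/(2k^2)$ gives the desired $g(i/k) + 1/(2k^2)$. The only substantive content beyond Lemma~\ref{lem:basic_analysis} is the passage from ``expected marginal gain in one random thread'' to ``average marginal gain across $k$ deterministic threads,'' and this is exactly what Lemma~\ref{lem:basic_deterministic_properties} was designed to supply; so no real obstacle remains, only careful bookkeeping of which inequality plays which role.
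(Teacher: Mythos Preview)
Your proposal is correct and follows essentially the same approach as the paper: both argue the boundary cases separately, then use Lemma~\ref{lem:basic_deterministic_properties} to lower bound $\sum_j [f(A_i^j)-f(A_{i-1}^j)]$ by the fractional matching value, compare $M_i^j$ to $T_{i-1}^j$ via the maximizer property and submodularity, invoke the right-hand inequality of~\eqref{eq:main_inequality_T} with $S=\varnothing$ to obtain the recurrence $\bar f_i \geq \bar f_{i-1} + (f(T)-2\bar f_{i-1})/(k-i+1)$, and finish with the identical $g'$-integral computation from Lemma~\ref{lem:basic_analysis}. The only cosmetic differences are phrasing (the paper notes explicitly that $\sum_j [f(A_i^j)-f(A_{i-1}^j)]$ \emph{equals} the weight of $R_i$ before applying the lemma, and your ``averaging over $j$ and dividing by $k$'' is redundant since averaging already divides by $k$), but the logical content is the same.
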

\begin{proof}
Since $g(1) = 1/2$, the above discussion implies that the lemma follows from Corollary~\ref{cor:mainineq} in the case $i = k$. We prove the lemma for the other cases by induction. For $i = 0$ the lemma holds due to the non-negativity of $f$ since $g(0) = 0$. The rest of the proof is devoted to showing that the lemma holds for $1 \leq i < k$ given that it holds for $i - 1$.

Observe that the expression $\sum_{j = 1}^j [f(A_i^j) - f(A_i^{j-1})]$ is equal to the weight of the matching $R_i$, which is at least $(k - i + 1)^{-1} \cdot \sum_{j = 1}^k \sum_{u \in M_i^j} f(u \mid A_{i-1}^j)$ by Lemma~\ref{lem:basic_deterministic_properties}. Thus,
\begin{align*}
	\sum_{j = 1}^k [f(A_i^j) - f(A_i^{j-1})]
	\geq{} &
	\frac{\sum_{j = 1}^k \sum_{u \in M_i^j} f(u \mid A_{i-1}^j)}{k - i + 1}
	\geq
	\frac{\sum_{j = 1}^k \sum_{u \in T_{i - 1}^j} f(u \mid A_{i-1}^j)}{k - i + 1}\\
	\geq{} &
	\frac{\sum_{j = 1}^k f(T_{i - 1}^j \mid A_{i-1}^j)}{k - i + 1}
	\geq
	\frac{\sum_{j = 1}^k [f(T) - 2f(A_{i-1}^j)]}{k - i + 1}
	\enspace,
\end{align*}
where the second inequality follows from the definition of $M_i^j$, the second inequality follows from the submodularity of $f$ and the last inequality follows from the right inequality of~\eqref{eq:main_inequality_T} by choosing $S = \varnothing$.

Rearranging this inequality, we get
{\mleftright
\begin{align*}
	\sum_{j = 1}^k f(A^j_i)
	\geq{} &
	\sum_{j = 1}^k f(A^j_{i-1}) + \frac{\sum_{j = 1}^k [f(T) - 2f(A_{i-1}^j)]}{k - i + 1}
	=
	\frac{k - i - 1}{k - i + 1} \cdot \sum_{j = 1}^k f(A^j_{i-1}) + \frac{k \cdot f(T)}{k - i + 1}\\
	\geq{} &
	\frac{k(k - i - 1)}{k - i + 1} \cdot g\left(\frac{i - 1}{k}\right) \cdot f(T) + \frac{k \cdot f(T)}{k - i + 1}
	=
	k\left[g\left(\frac{i - 1}{k}\right) + \frac{1 - 2g(\frac{i-1}{k})}{k - i + 1} \right] \cdot f(T)
	\enspace,
\end{align*}
}%
where the second inequality follows from the induction hypothesis (since $i < k$). Using the observation that the derivative $g'(x)$ of $g(x)$ obeys $1 - x = g'(x) = (1 - 2g(x)) / (1 - x)$, the last inequality yields
{\mleftright
\begin{align*}
	\frac{k^{-1} \cdot \sum_{j = 1}^k f(A^j_i)}{f(T)}
	\geq{} &
	g\left(\frac{i - 1}{k}\right) + \frac{1 - 2g(\frac{i-1}{k})}{k - i + 1}
	=
	g\left(\frac{i - 1}{k}\right) + \frac{g'(\frac{i-1}{k})}{k}\\
	={} &
	g\left(\frac{i - 1}{k}\right) + \int_{(i - 1)/k}^{i/k} g'(x) dx + \int_{(i - 1)/k}^{i/k} \left[g'\left(\frac{i-1}{k}\right) - g'(x)\right] dx\\
	={}&
	g(\nicefrac{i}{k}) + \int_{(i - 1)/k}^{i/k} \left(x - \frac{i-1}{k}\right) dx
	=
	g(\nicefrac{i}{k}) + \frac{1}{2k^2}
	\enspace.
	\qedhere
\end{align*}
}%
\end{proof}

We now need a method to lower bound the gain of Algorithm~\ref{alg:deterministic_greedy} in its last iterations. The next lemma proves such a lower bound, and Corollary~\ref{cor:combined_bound_deterministic} combines this bound with the previous lemma to get a powerful lower bound on the quality of the output of Algorithm~\ref{alg:deterministic_greedy}.
\begin{lemma} \label{lem:final_gain_deterministic}
For every $0 \leq i \leq k$,
\[
	k^{-1} \cdot \sum_{j = 1}^k f(A^j_k \mid A^j_i)
	\geq
	(1 - \nicefrac{i}{k}) \cdot f(B \mid T) - 2k^{-1} \cdot \sum_{j = 1}^k f(A^j_k) + f(T)
	\enspace.
\]
\end{lemma}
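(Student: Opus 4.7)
The plan is to carry out a deterministic analog of the proof of Lemma~\ref{cor:combined_bound}, using as the essential ``randomization'' the uniform distribution over the index $j \in \{1,\ldots,k\}$. The key combinatorial input is the second clause of Lemma~\ref{lem:basic_deterministic_properties}, which tells us that every element $v \in B$ appears in exactly $k - i$ of the sets $B_i^1, B_i^2, \ldots, B_i^k$. When we view $j$ as uniformly random, this is exactly the property required to apply Lemma~\ref{lem:sampling} to the sets $B_i^j$ with parameter $p = 1 - i/k$, which mirrors the role played by Observation~\ref{obs:T_i_distribution} in the randomized analysis.

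First I would establish a per-index inequality of the form
\[
	f(A_k^j \mid A_i^j) \geq f(B_i^j \mid T) - 2 f(A_k^j) + f(T)
\]
for every fixed $j$. Plugging $S = A_k^j \cup T$ into \eqref{eq:main_inequality_B} of Corollary~\ref{cor:main_inequality_deterministic} and using the fact that $A_i^j \subseteq A_k^j$, the terms involving $A_k^j$ telescope and give $f(A_k^j \mid A_i^j) \geq f(B_i^j \mid A_k^j \cup T)$. I would then rewrite the right-hand side as $f(B_i^j \cup A_k^j \mid T) - f(A_k^j \mid T)$ and lower bound the first term by $f(B_i^j \mid T)$ using monotonicity of $f$, obtaining $f(A_k^j \mid A_i^j) \geq f(B_i^j \mid T) - f(A_k^j \cup T) + f(T)$. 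To eliminate $f(A_k^j \cup T)$, I would apply \eqref{eq:main_inequality_T} with $i = 0$ and $S = A_k^j$; since $A_0^j = \varnothing$ and $T_0^j = T$, this yields $f(A_k^j \cup T) \leq 2 f(A_k^j)$, and substitution gives the displayed per-$j$ bound.

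Finally, I would average the per-$j$ inequality over $j = 1, \ldots, k$. The only nontrivial term to handle is $k^{-1} \sum_{j=1}^k f(B_i^j \mid T)$. For this I would apply Lemma~\ref{lem:sampling} to the submodular function $g(S) \triangleq f(S \mid T) = f(S \cup T) - f(T)$ on the ground set $B$, with the ``random'' set being $B_i^J$ for $J$ uniform on $\{1, \ldots, k\}$. By Lemma~\ref{lem:basic_deterministic_properties}, this random set contains each $v \in B$ with probability $1 - i/k$, so Lemma~\ref{lem:sampling} (applied with $p = 1 - i/k$) gives $k^{-1} \sum_j f(B_i^j \mid T) \geq (1 - i/k) \cdot f(B \mid T)$. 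Combining this with the averaged per-$j$ inequality produces the desired bound. The main subtlety is selecting the right $S$ in the two applications of Corollary~\ref{cor:main_inequality_deterministic} so as to end up with a term of the form $f(B_i^j \mid T)$ that is amenable to the sampling lemma; once this is set up, the remaining manipulations are routine.
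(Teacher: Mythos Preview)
Your proposal is correct and follows essentially the same argument as the paper's proof: a per-index inequality obtained by plugging $S = A_k^j \cup T$ into \eqref{eq:main_inequality_B} and $S = A_k^j$ into \eqref{eq:main_inequality_T}, followed by averaging over $j$ and applying Lemma~\ref{lem:sampling} to the marginal function $f(\cdot \mid T)$ via the appearance-count property from Lemma~\ref{lem:basic_deterministic_properties}. The only cosmetic difference is that you phrase the sampling step in terms of the auxiliary function $g(S) = f(S \mid T)$ explicitly, whereas the paper invokes Lemma~\ref{lem:sampling} directly; also note that the appearance-count statement is in the \emph{first} bullet of Lemma~\ref{lem:basic_deterministic_properties}, not the second.
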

\begin{proof}
Observe that, for every $1 \leq j \leq k$,
\begin{align*}
	f(A^j_k \mid A^j_i)
	\geq{} &
	f(B_i^j \mid A^j_k \cup T)
	=
	f(B_i^j \cup A^j_k \mid T) - f(A^j_k \mid T)]
	\geq
	f(B_i^j \mid T) - f(A^j_k \mid T)\\
	={} &
	f(B_i^j \mid T) - f(A^j_k \cup T) + f(T)
	\geq
	f(B_i^j \mid T) - 2f(A^j_k) + f(T)
	\enspace,
\end{align*}
where the first inequality follows from the left side of~\eqref{eq:main_inequality_B} by plugging $S = A^j_k \cup T$, the second inequality follows from monotonicity and the last follows from Inequality~\eqref{eq:main_inequality_T} by setting $S = A^j_k$.

Let $B_i$ be a uniformly random set picked out of $B_i^1, B_i^2, \dotsc, B_i^k$. Recall that every element of $B$ appears in exactly $k - i$ out of these sets by Lemma~\ref{lem:basic_deterministic_properties}, and thus $B_i$ contains every such element with probability $1 - \nicefrac{i}{k}$. Together with Lemma~\ref{lem:sampling}, this yields
\[
	\bE[f(B_i \mid T)]
	\geq
	\nicefrac{i}{k} \cdot f(\varnothing \mid T) + (1 - \nicefrac{i}{k}) \cdot f(B \mid T)
	=
	(1 - \nicefrac{i}{k}) \cdot f(B \mid T)
	\enspace.
\]

Combining the two inequalities that we have proved, we get
\begin{align*}
	k^{-1} \cdot \sum_{j = 1}^k f(A^j_k \mid A^j_i)
	\geq{} &
	k^{-1} \cdot \sum_{j = 1}^k [f(B_i^j \mid T) - 2f(A^j_k) + f(T)]\\
	={} &
	\bE[f(B_i \mid T)] - 2k^{-1} \cdot \sum_{j = 1}^k f(A^j_k) + f(T)\\
	\geq{} &
	(1 - \nicefrac{i}{k}) \cdot f(B \mid T) - 2k^{-1} \cdot \sum_{j = 1}^k f(A^j_k) + f(T)
	\enspace.
	\qedhere
\end{align*}
\end{proof}
\begin{corollary} \label{cor:combined_bound_deterministic}
For every $0 \leq x \leq 1$, $3k^{-1} \cdot \sum_{j = 1}^k f(A^j_k) \geq (1 + g(x)) \cdot f(T) + (1 - x) \cdot f(B \mid T)$.
\end{corollary}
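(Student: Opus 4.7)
The plan is to mirror the proof of Corollary~\ref{cor:combined_bound} step by step, substituting the deterministic ingredients developed in this section (Lemma~\ref{lem:basic_analysis_deterministic} and Lemma~\ref{lem:final_gain_deterministic}) for their randomized counterparts (Lemma~\ref{lem:basic_analysis} and the first inequality in the proof of Corollary~\ref{cor:combined_bound}). The abstract base $T'$ in the randomized version gets played by $T$ here, while $T$ in the randomized version gets played by $B$; the averaging over $j \in \{1, \ldots, k\}$ takes the role of the expectation.

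First I would dispose of the boundary case $x = 1$. Setting $i = k$ in Lemma~\ref{lem:final_gain_deterministic} and using that $f(A_k^j \mid A_k^j) = 0$ gives $k^{-1}\sum_{j=1}^k f(A_k^j) \geq f(T)/2$, which is exactly $(1 + g(1)) f(T) + (1 - 1) \cdot f(B \mid T)$ divided by $3$. Then, for $x < 1$, let $i = \lfloor xk \rfloor$ and $\alpha = \lfloor xk + 1\rfloor - xk$ so that $\alpha \cdot i + (1-\alpha)(i+1) = xk$. I would first derive a fractional analogue of Lemma~\ref{lem:basic_analysis_deterministic}:
\[
    \alpha \cdot k^{-1}\sum_{j=1}^k f(A_i^j) + (1-\alpha) \cdot k^{-1}\sum_{j=1}^k f(A_{i+1}^j)
    \geq g(x) \cdot f(T),
\]
proved by the same calculation used in Lemma~\ref{lem:basic_analysis frac}: substitute the two pointwise bounds from Lemma~\ref{lem:basic_analysis_deterministic}, expand $g$, and use $\min\{\alpha, 1-\alpha\} \geq \alpha(1-\alpha)$ together with the extra $\delta = 1/(2k^2)$ term to absorb the concavity defect. (Since $x < 1$, at least one of $i, i+1$ is in $\{1, \ldots, k-1\}$, so the $\delta$ term really is available.)

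Next, set $F \triangleq k^{-1}\sum_{j=1}^k f(A_k^j)$ and $S_\ell \triangleq k^{-1}\sum_{j=1}^k f(A_\ell^j)$. Since $A_\ell^j \subseteq A_k^j$ by construction, we have $f(A_k^j \mid A_\ell^j) = f(A_k^j) - f(A_\ell^j)$, so averaging Lemma~\ref{lem:final_gain_deterministic} over $j$ at index $\ell \in \{i, i+1\}$ and rearranging yields
\[
    3F \geq S_\ell + (1 - \nicefrac{\ell}{k}) \cdot f(B \mid T) + f(T).
\]
Taking the convex combination with weights $\alpha$ (at $\ell = i$) and $1-\alpha$ (at $\ell = i+1$), the coefficient of $f(B \mid T)$ becomes $1 - (\alpha i + (1-\alpha)(i+1))/k = 1 - x$, and the combined inequality reads
\[
    3F \geq \bigl[\alpha S_i + (1-\alpha) S_{i+1}\bigr] + (1-x) \cdot f(B \mid T) + f(T).
\]
Plugging in the fractional bound on $\alpha S_i + (1-\alpha) S_{i+1}$ established above finishes the proof.

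The only step with any substance is the fractional interpolation lemma, but its derivation is purely algebraic and identical to that of Lemma~\ref{lem:basic_analysis frac}, so I expect no real obstacle; everything else is a straightforward repackaging of the randomized argument in averaged, deterministic form.
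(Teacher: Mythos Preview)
Your proposal is correct and follows essentially the same approach as the paper's proof: both take a convex combination of Lemma~\ref{lem:basic_analysis_deterministic} at two adjacent integer indices to obtain the $g(x)\cdot f(T)$ bound (handling the concavity defect via the spare $\delta=1/(2k^2)$ term and $\min\{\alpha,1-\alpha\}\geq\alpha(1-\alpha)$), then take the matching convex combination of Lemma~\ref{lem:final_gain_deterministic}, and add. The only cosmetic differences are that you treat $x=1$ separately and use $(i,i{+}1)$ rather than $(\lfloor xk\rfloor,\lceil xk\rceil)$; these choices are harmless and, if anything, make the boundary cases slightly cleaner.
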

\begin{proof}
Let $i_1 = \lfloor xk \rfloor$ and $i_2 = \lceil xk \rceil$. Clearly, there must be a value $\alpha \in [0, 1]$ such that $x = [\alpha i_1 + (1 - \alpha) i_2]/k$. Additionally, since we assume that $k \geq 2$, at least one of the values $i_1$ or $i_2$ belongs to $(0, k)$. Thus, by Lemma~\ref{lem:basic_analysis_deterministic},
\begin{align*}
	\mspace{63mu}&\mspace{-63mu}\frac{\alpha k^{-1} \cdot \sum_{j = 1}^k f(A_{i_1}^j) + (1 - \alpha) k^{-1} \cdot \sum_{j = 1}^k f(A_{i_2}^j)}{f(T)}\\
	\geq{} &
	\alpha \cdot g(\nicefrac{i_1}{k}) + (1-\alpha) \cdot g(\nicefrac{i_2}{k}) + \min\{\alpha, 1 - \alpha\} / (2k^2)\\
	={} &
	\frac{\alpha i_1}{k} - \frac{\alpha i_1^2}{2k^2} + \frac{(1 - \alpha)i_2}{k} - \frac{(1 - \alpha) i_2^2}{2k^2} + \frac{\min\{\alpha, 1 - \alpha\}}{2k^2}\\
	={} &
	\frac{\alpha i_1 + (1 - \alpha)i_2}{k} - \frac{(\alpha i_1 + (1 - \alpha) i_2)^2}{2k^2} - \frac{\alpha(1 - \alpha)(i_1 - i_2)^2}{2k^2} + \frac{\min\{\alpha, 1 - \alpha\}}{2k^2}
	\enspace.
\end{align*}
Observe now that $(i_1 - i_2)^2 \in \{0, 1\}$ and $\min\{\alpha, 1 - \alpha\} \geq \alpha(1-\alpha)$. Thus, the last term on the rightmost expression in the last inequality is at least as large as the term before it, which gives us
\begin{align*}
	\alpha k^{-1} \cdot \sum_{j = 1}^k f(A_{i_1}^j) + (1 - \alpha) k^{-1} \cdot \sum_{j = 1}^k f(A_{i_2}^j)
	\geq{} &
	\left[\frac{\alpha i_1 + (1 - \alpha)i_2}{k} - \frac{(\alpha i_1 + (1 - \alpha) i_2)^2}{2k^2}\right] \cdot f(T)\\
	={} &
	(x - x^2) \cdot f(T)
	=
	g(x) \cdot f(T)
	\enspace.
\end{align*}

By Lemma~\ref{lem:final_gain_deterministic}, we also get
\begin{align*}
	\alpha k^{-1} \cdot \sum_{j = 1}^k f(A_k^j \mid A_{i_1}^j) + (1 - \alpha) k^{-1} &\cdot \sum_{j = 1}^k f(A_k^j \mid A_{i_2}^j)\\
	\geq{} &
	\alpha\mleft[(1 - \nicefrac{i_1}{k}) \cdot f(B \mid T) - 2k^{-1} \cdot \sum_{j = 1}^k f(A^j_k) + f(T)\mright]\\
	&+
	(1 - \alpha)\mleft[(1 - \nicefrac{i_2}{k}) \cdot f(B \mid T) - 2k^{-1} \cdot \sum_{j = 1}^k f(A^j_k) + f(T)\mright]\\
	={} &
	 (1 - x) \cdot f(B \mid T) - 2k^{-1} \cdot \sum_{j = 1}^k f(A^j_k) + f(T)
	\enspace.
\end{align*}
The corollary now follows by adding this inequality to the previous one and rearranging.
\end{proof}

The following proposition summarizes some of the properties we have proved for Algorithm~\ref{alg:deterministic_greedy}.
\begin{proposition} \label{prop:deterministic_properties}
For every base $T$ of $\cM$, the output set $A$ of Algorithm~\ref{alg:deterministic_greedy} obeys
\begin{compactenum}
	\item $f(A) \geq f(T)/2$.
	\item $f(A) \geq \frac{(1 + g(x)) \cdot f(T) + (1 - x) \cdot f(B \mid T)}{3}$ for every $x \in [0, 1]$.
\end{compactenum}
\end{proposition}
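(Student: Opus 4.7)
The proposition is a clean corollary of the two technical estimates already established, namely Corollary~\ref{cor:main_inequality_deterministic} and Corollary~\ref{cor:combined_bound_deterministic}. The only additional ingredient is the observation that, because Algorithm~\ref{alg:deterministic_greedy} outputs the best among $A^1_k, A^2_k, \dotsc, A^k_k$, we have
\[
	f(A) \;=\; \max_{1 \le j \le k} f(A^j_k) \;\geq\; \frac{1}{k} \sum_{j = 1}^k f(A^j_k)\enspace,
\]
so any lower bound on the average of the $f(A^j_k)$'s is automatically a lower bound on $f(A)$.

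\textbf{Part 1.} I would apply the first chain of inequalities in Corollary~\ref{cor:main_inequality_deterministic} with $i = k$ and $S = \varnothing$. Since $T_k^j = \varnothing$ (as $A_k^j \cupdot T_k^j$ is a base and so is $A_k^j$ itself), this gives $2 f(A^j_k) \geq f(T)$ for every $j$, and averaging over $j$ (or just picking any one) yields $k^{-1} \sum_{j = 1}^k f(A^j_k) \geq f(T)/2$. Combined with the max-vs-average observation above, this proves $f(A) \geq f(T)/2$.

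\textbf{Part 2.} For the second bound I would directly invoke Corollary~\ref{cor:combined_bound_deterministic}, which states
\[
	\frac{3}{k} \sum_{j = 1}^k f(A^j_k) \;\geq\; (1 + g(x)) \cdot f(T) + (1 - x) \cdot f(B \mid T)
\]
for every $x \in [0,1]$. Dividing by $3$ and again using $f(A) \geq k^{-1} \sum_{j=1}^k f(A^j_k)$ gives the claimed inequality.

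\textbf{Where is the work.} There is essentially no obstacle at this stage: the proposition is merely a repackaging of the two corollaries in terms of the single output $A$, exactly as Proposition~\ref{prop:random_greedy_properties} repackaged Corollary~\ref{cor:mainineq} and Corollary~\ref{cor:combined_bound} for the randomized algorithm. The real content -- building the exchange bijections $h_i^j$ via Lemma~\ref{lem:matroid_matching_greedy}, showing the perfect matching $R_i$ exists with sufficient weight (Lemma~\ref{lem:basic_deterministic_properties}), and deriving the per-iteration monotonicity estimates (Lemma~\ref{lem:A_OPT_development_deterministic} and Lemma~\ref{lem:basic_analysis_deterministic}) -- has already been done; the proposition just states the consequences we will plug into the analysis of Algorithm~\ref{alg:divide}.
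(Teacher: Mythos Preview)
Your proposal is correct and follows essentially the same route as the paper: it combines the max-versus-average observation with Corollary~\ref{cor:combined_bound_deterministic} for Part~2, and with the $i=k$, $S=\varnothing$ case for Part~1. The only cosmetic difference is that for Part~1 the paper cites Lemma~\ref{lem:basic_analysis_deterministic} at $i=k$ (where $g(1)=1/2$) rather than Corollary~\ref{cor:main_inequality_deterministic} directly, but the paper itself remarks just before that lemma that the $i=k$ case follows from the corollary exactly as you describe.
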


\begin{proof}
The proposition follows by combining Lemma~\ref{lem:basic_analysis_deterministic} (for $i = k$) and Corollary~\ref{cor:combined_bound_deterministic} with the observation that the value of $A$ is at least the average value of $A_k^1, A_k^2, \dotsc, A_k^k$ since Algorithm~\ref{alg:deterministic_greedy} picks the best among these sets as $A$.
\end{proof}

We are now ready to prove Theorem~\ref{thm:main}.
\begin{proof}[Proof of Theorem~\ref{thm:main}]
Consider a modified version of Algorithm~\ref{alg:divide} in which the invocations of Algorithm~\ref{alg:ResidualRandomGreedy} have been replaced with invocations of Algorithm~\ref{alg:deterministic_greedy} (with the parameter $B$ set to $B_1$ in the first invocation and to $A_1$ in the other). 
The (fully deterministic) algorithm thus obtained is given as Algorithm~\ref{alg:divide_deterministic}.
\begin{algorithm}
\caption{\textsf{Matroid Split and Grow - Deterministic}$(f,\cM)$} \label{alg:divide_deterministic}
$(A_1,B_1) \gets \mbox{\Split}(f, \cM, p)$.\\
$A_2 \gets \text{\RPGreedy}(f(\cdot \mid A_1), \cM / A_1, B_1)$. \\
$B_2 \gets \text{\RPGreedy}(f(\cdot \mid B_1), \cM / B_1, A_1)$. \\
Return the better solution out of $A=(A_1 \cupdot A_2)$ and $B=(B_1 \cupdot B_2)$.
\end{algorithm}

Observe that Proposition~\ref{prop:deterministic_properties} is identical to Proposition~\ref{prop:random_greedy_properties} except for two differences. First, the inequalities now hold deterministically rather than in expectation. Second, the sets $T_1$ and $T_2$ have been replaced with $T$ and $B$, respectively. Thus, the the analysis of Algorithm~\ref{alg:divide} can be applied to Algorithm~\ref{alg:divide_deterministic} by simply replacing every use of Proposition~\ref{prop:random_greedy_properties} with a use of Proposition~\ref{prop:deterministic_properties}, which shows that Algorithm~\ref{alg:divide_deterministic} is also a $0.5008$-approximation algorithm. 
\end{proof}

\bibliographystyle{plain}
\bibliography{SubmodularMax}

\begin{thebibliography}{10}

\bibitem{BV14}
Ashwinkumar Badanidiyuru and Jan Vondr{\'{a}}k.
\newblock Fast algorithms for maximizing submodular functions.
\newblock In {\em SODA}, pages 1497--1514, 2014.

\bibitem{B69}
Richard~A. Brualdi.
\newblock Comments on bases in dependence structures.
\newblock {\em Bull. of the Australian Math. Soc.}, 1(02):161--167, 1969.

\bibitem{BF16a}
Niv Buchbinder and Moran Feldman.
\newblock Constrained submodular maximization via a non-symmetric technique.
\newblock {\em CoRR}, abs/1611.03253, 2016.

\bibitem{BF18}
Niv Buchbinder and Moran Feldman.
\newblock Deterministic algorithms for submodular maximization problems.
\newblock {\em {ACM} Trans. Algorithms}, 14(3):32:1--32:20, June 2018.

\bibitem{BFG18}
Niv Buchbinder, Moran Feldman, and Mohit Garg.
\newblock Online submodular welfare maximization: Beating 1/2 made simple,
  2018.
\newblock Unpublished manuscript.

\bibitem{BFNS15}
Niv Buchbinder, Moran Feldman, Joseph Naor, and Roy Schwartz.
\newblock A tight linear time (1/2)-approximation for unconstrained submodular
  maximization.
\newblock {\em {SIAM} J. Comput.}, 44(5):1384--1402, 2015.

\bibitem{BFNS14}
Niv Buchbinder, Moran Feldman, Joseph~(Seffi) Naor, and Roy Schwartz.
\newblock Submodular maximization with cardinality constraints.
\newblock In {\em SODA}, pages 1433--1452, 2014.

\bibitem{BFS17}
Niv Buchbinder, Moran Feldman, and Roy Schwartz.
\newblock Comparing apples and oranges: Query trade-off in submodular
  maximization.
\newblock {\em Math. Oper. Res.}, 42(2):308--329, 2017.

\bibitem{CCPV11}
Gruia C{\u{a}}linescu, Chandra Chekuri, Martin P{\'{a}}l, and Jan
  Vondr{\'{a}}k.
\newblock Maximizing a monotone submodular function subject to a matroid
  constraint.
\newblock {\em {SIAM} J. Comput.}, 40(6):1740--1766, 2011.

\bibitem{FMV11}
Uriel Feige, Vahab~S. Mirrokni, and Jan Vondr{\'{a}}k.
\newblock Maximizing non-monotone submodular functions.
\newblock {\em {SIAM} J. Comput.}, 40(4):1133--1153, 2011.

\bibitem{FW14}
Yuval Filmus and Justin Ward.
\newblock Monotone submodular maximization over a matroid via non-oblivious
  local search.
\newblock {\em {SIAM} J. Comput.}, 43(2):514--542, 2014.

\bibitem{FNW78}
M.~L. Fisher, G.~L. Nemhauser, and L.~A. Wolsey.
\newblock An analysis of approximations for maximizing submodular set functions
  -- {II}.
\newblock {\em Mathematical Programming Study}, 8:73--87, 1978.

\bibitem{GV11}
Shayan~Oveis Gharan and Jan Vondr{\'{a}}k.
\newblock Submodular maximization by simulated annealing.
\newblock In {\em SODA}, pages 1098--1116, 2011.

\bibitem{G73}
Curtis Greene.
\newblock A multiple exchange property for bases.
\newblock {\em Proceedings of the American Mathematical Society}, 39(1), 1973.

\bibitem{KMZ15}
Nitish Korula, Vahab~S. Mirrokni, and Morteza Zadimoghaddam.
\newblock Online submodular welfare maximization: Greedy beats 1/2 in random
  order.
\newblock In {\em STOC}, pages 889--898, 2015.

\bibitem{LMNS10}
Jon Lee, Vahab~S. Mirrokni, Viswanath Nagarajan, and Maxim Sviridenko.
\newblock Maximizing nonmonotone submodular functions under matroid or knapsack
  constraints.
\newblock {\em {SIAM} J. Discrete Math.}, 23(4):2053--2078, 2010.

\bibitem{MBKVK15}
Baharan Mirzasoleiman, Ashwinkumar Badanidiyuru, Amin Karbasi, Jan
  Vondr{\'{a}}k, and Andreas Krause.
\newblock Lazier than lazy greedy.
\newblock In {\em AAAI}, pages 1812--1818, 2015.

\bibitem{MSSU18}
Eyal Mizrachi, Roy Schwartz, Joachim Spoerhase, and Sumedha Uniyal.
\newblock A tight approximation for submodular maximization with mixed packing
  and covering constraints.
\newblock {\em CoRR}, abs/1804.10947, 2018.

\bibitem{NW78}
G.~L. Nemhauser and L.~A. Wolsey.
\newblock Best algorithms for approximating the maximum of a submodular set
  function.
\newblock {\em Math. Oper. Res.}, 3(3):177--188, 1978.

\bibitem{S03}
A.~Schrijver.
\newblock {\em Combinatorial Optimization: Polyhedra and Effciency}.
\newblock Springer, 2003.

\bibitem{W74}
Douglas~R. Woodall.
\newblock An exchange theorem for bases of matroids.
\newblock {\em Journal of Combinatorial Theory (B)}, 16:227--228, 1974.

\end{thebibliography}

\appendix
\section{Proof of Lemma~\ref{lem:matroid_matching_greedy}}\label{sec:proof}

In this section, we prove Lemma~\ref{lem:matroid_matching_greedy}. We repeat the lemma here for convenience.

\begin{replemma}{lem:matroid_matching_greedy}
Let $A$ and $B$ be two bases of a matroid $\cM = (\cN, \cI)$, where $A$ is a maximum weight base according to some weight function $w\colon \cN \to \nnR$.
Then, there exist a bijective function $h\colon A \to B$ such that for every element $u \in A$
\begin{enumerate}
	\item $(B - h(u)) + u$ is a base of $\cM$.
	\item $w(u) \geq w(h(u))$.
\end{enumerate}
\end{replemma}

We prove Lemma~\ref{lem:matroid_matching_greedy} by induction on the rank of $\cM$, which is denoted by $k$ as usual. For $k=0$ the lemma is trivial. In the rest of the section we assume Lemma~\ref{lem:matroid_matching_greedy} holds for every matroid of rank $k - 1$ and prove it for $\cM$ (whose rank is $k$).
We need the following well known fact about matroids, which was proved by~\cite{B69} and can be found (with a different phrasing) as Theorem~39.12 in~\cite{S03}.
\begin{lemma} \label{lem:two_sided_exchange}
If $X$ and $Y$ are two bases of a matroid $\cM = (\cN, \cI)$, then for every element $x \in X \setminus Y$ there exists an element $y \in Y \setminus X$ such that both $(X - x) + y$ and $(Y - y) + x$ are bases.
\end{lemma}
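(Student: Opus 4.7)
The plan is to exploit the fundamental circuit of $x$ with respect to $Y$. Since $Y$ is a base and $x \notin Y$, the set $Y + x$ is dependent and, because $Y$ itself is independent, contains a unique circuit $C$, which necessarily includes $x$. For every $y \in C \setminus \{x\}$, the set $(Y + x) - y$ contains no circuit of $Y + x$ at all (the only such circuit is $C$, and $C \not\subseteq (Y+x)-y$), so it is independent; since it has $k$ elements, $(Y - y) + x$ is automatically a base. Thus the ``$Y$-side'' of the exchange comes for free as soon as I pick any $y \in C \setminus \{x\}$, and the real task is to choose such a $y$ that additionally satisfies $y \in Y \setminus X$ and $(X - x) + y$ is a base.

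I would establish the existence of such a $y$ by contradiction. Suppose no element of $C \setminus \{x\}$ works. Every such $y$ lies in $Y$, so there are two cases. If $y \in X$ as well, then $y \in X - x$ and hence $y \in \mathrm{cl}(X - x)$ trivially. If instead $y \notin X$, then by assumption $(X - x) + y$ is not a base; since it has $k$ elements, it must be dependent, so again $y \in \mathrm{cl}(X - x)$. Either way, $C \setminus \{x\} \subseteq \mathrm{cl}(X - x)$. Because $C$ is a circuit containing $x$, its proper subset $C \setminus \{x\}$ spans $x$, i.e., $x \in \mathrm{cl}(C \setminus \{x\}) \subseteq \mathrm{cl}(X - x)$. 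This contradicts the independence of $X$. Hence some $y \in C \setminus \{x\}$ falls outside $X$ and makes $(X - x) + y$ independent of size $k$; combined with the observation of the first paragraph, this $y$ has all the required properties.

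The main obstacle I anticipate is the bookkeeping in the case analysis above, in particular the need to handle separately the possibility that an element of $C \setminus \{x\}$ lies in $X \cap Y$ (in which case removing $x$ from $X$ and replacing it by such a $y$ would not even yield a set of size $k$). Folding the two cases together through the language of the closure operator is the key trick that makes the argument uniform. A more streamlined but less self-contained alternative would be to invoke the circuit--cocircuit intersection property $|C \cap D| \neq 1$ applied to the fundamental circuit $C$ of $x$ with respect to $Y$ and the fundamental cocircuit $D = \cN \setminus \mathrm{cl}(X - x)$ of $x$ with respect to $X$: both contain $x$, so their intersection contains some $y \neq x$, and this $y$ automatically lies in $Y \setminus X$ and satisfies both exchange conditions. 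This would off-load the real work onto that well-known matroid fact.
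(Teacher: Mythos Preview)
Your argument is correct. The paper, however, does not prove this lemma at all: it quotes it as ``a well known fact about matroids, which was proved by~\cite{B69} and can be found (with a different phrasing) as Theorem~39.12 in~\cite{S03}'' and uses it as a black box in the inductive proof of Lemma~\ref{lem:matroid_matching_greedy}. So there is nothing to compare against on the paper's side.

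For what it is worth, the route you take---intersecting the fundamental circuit $C(x,Y)$ with the set of elements outside $\mathrm{cl}(X-x)$ and arguing by contradiction that the intersection cannot reduce to $\{x\}$---is exactly the classical proof. Your closing remark about the circuit--cocircuit intersection property is the same proof in disguise: $D = \cN \setminus \mathrm{cl}(X-x)$ is the fundamental cocircuit of $x$ in $X$, and the closure-based contradiction you wrote out is precisely the verification that $|C \cap D| \neq 1$ in this instance. Either presentation is fine; the closure version you gave is self-contained and avoids invoking the circuit--cocircuit lemma as an external fact.
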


Let $u_A$ be an arbitrary minimum weight element of the base $A$ according to the weight function $w$. Using the last lemma we get the following observation.
\begin{observation} \label{obs:exchange_pair}
There exists an element $u_B \in B$ such that
\begin{compactitem}
	\item $(A - u_A) + u_B$ and $(B - u_B) + u_A$ are both bases of $\cM$.
	\item $w(u_A) \geq w(u_B)$.
\end{compactitem}
\end{observation}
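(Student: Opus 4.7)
The plan is to split into two cases depending on whether the element $u_A$ already belongs to $B$. If $u_A \in B$, I would simply take $u_B = u_A$. Then $(A - u_A) + u_B = A$ and $(B - u_B) + u_A = B$ are trivially bases of $\cM$, and $w(u_A) \geq w(u_B)$ holds with equality. This case handles the situation completely, so it remains only to treat the case $u_A \in A \setminus B$.

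For the case $u_A \in A \setminus B$, I would apply Lemma~\ref{lem:two_sided_exchange} directly: since $A$ and $B$ are both bases of $\cM$ and $u_A \in A \setminus B$, the lemma produces an element $u_B \in B \setminus A$ such that both $(A - u_A) + u_B$ and $(B - u_B) + u_A$ are bases of $\cM$. This immediately establishes the first bullet of the observation for this $u_B$.

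For the weight inequality, the key observation is that the maximality of $A$ can be exploited: since $(A - u_A) + u_B$ is a base of $\cM$ and $A$ is a maximum weight base, we have
\[
    w(A) \geq w((A - u_A) + u_B) = w(A) - w(u_A) + w(u_B),
\]
which rearranges to $w(u_A) \geq w(u_B)$, as desired.

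Note that this argument never actually invokes the fact that $u_A$ is a \emph{minimum} weight element of $A$; it only uses that $A$ itself is a maximum weight base. The minimality hypothesis is presumably included because that is how the observation is deployed in the subsequent inductive proof of Lemma~\ref{lem:matroid_matching_greedy}. There is no real obstacle here; the proof is essentially an immediate combination of the two-sided exchange lemma with the exchange-optimality characterization of a maximum weight base.
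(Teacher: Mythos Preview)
Your proof is correct and essentially identical to the paper's own argument: both split into the cases $u_A \in B$ and $u_A \notin B$, use Lemma~\ref{lem:two_sided_exchange} in the second case to obtain $u_B$, and then derive $w(u_A)\ge w(u_B)$ from the maximality of $A$ via the exchange $(A-u_A)+u_B$. Your side remark that the minimality of $u_A$ is not used here is also accurate.
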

\begin{proof}
If $u_A$ appears in $B$, then the observation is satisfied by choosing $u_B = u_A$. Otherwise, by Lemma~\ref{lem:two_sided_exchange}, there must be an element $u_B \in B \setminus A$ which obeys the first part of the observation. 
Moreover, $w(u_A) \geq w(u_B)$ because otherwise the base $(A - u_A) + u_B$ has a higher value than $A$, contradicting the fact that $A$ is a maximum weight base according to the weight function $w$.
\end{proof}

Consider now the matroid $\cM / u_B$.
\begin{lemma}
$A - u_A$ is a maximum weight base of $\cM / u_B$ according to the weight function $w$.
\end{lemma}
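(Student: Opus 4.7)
My plan is to first observe that $A - u_A$ lies in $\cM/u_B$: by Observation~\ref{obs:exchange_pair}, $(A - u_A) + u_B$ is a base of $\cM$, so $A - u_A$ is a base of $\cM/u_B$. To establish the maximum weight property, I would take an arbitrary base $C$ of $\cM/u_B$ and aim to show $w(C) \leq w(A - u_A) = w(A) - w(u_A)$. Note that $C + u_B$ is then automatically a base of $\cM$.

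The main case to handle is when $u_B \notin A$. Here I would apply Lemma~\ref{lem:two_sided_exchange} to the two bases $C + u_B$ and $A$ of $\cM$, using the element $u_B \in (C+u_B) \setminus A$. This yields some $y \in A \setminus (C + u_B)$ such that $(C + u_B - u_B) + y = C + y$ is a base of $\cM$. Since $A$ is a maximum weight base, $w(A) \geq w(C + y) = w(C) + w(y)$. Finally, because $u_A$ is a minimum weight element of $A$ and $y \in A$, we have $w(y) \geq w(u_A)$, which rearranges to $w(C) \leq w(A) - w(u_A) = w(A - u_A)$, exactly as desired.

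The remaining case is $u_B \in A$. I would first observe that this forces $u_B = u_A$: otherwise $u_B \in A - u_A$, so $(A - u_A) + u_B$ equals $A - u_A$ and has size $k-1$, contradicting the fact guaranteed by Observation~\ref{obs:exchange_pair} that it is a base of $\cM$. With $u_B = u_A$, any base $C$ of $\cM/u_B$ satisfies $w(C + u_B) \leq w(A)$ (since $A$ is a maximum weight base of $\cM$), giving $w(C) \leq w(A) - w(u_B) = w(A - u_A)$ directly.

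I do not expect a substantial obstacle here; the only subtlety is recognizing that the case $u_B \in A$ collapses to $u_B = u_A$, after which both cases follow cleanly. The central trick is recognizing that Lemma~\ref{lem:two_sided_exchange} applied with $x = u_B$ lets us ``swap $u_B$ out'' to produce a base of $\cM$ of the form $C + y$ with $y \in A$, and then invoking the minimality of $w(u_A)$ to bound $w(y)$ from below.
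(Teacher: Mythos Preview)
Your proof is correct and takes a genuinely different route from the paper's. The paper argues via the greedy algorithm: with an appropriate tie-breaking rule, greedy on $\cM$ outputs $A$ and adds $u_A$ last, so greedy on $\cM/u_B$ makes the same choices until it terminates at $A-u_A$, which is therefore a maximum-weight base of $\cM/u_B$. Your argument instead compares weights directly: for an arbitrary base $C$ of $\cM/u_B$ you use Lemma~\ref{lem:two_sided_exchange} (applied to $C+u_B$ and $A$ at the element $u_B$) to produce $y\in A$ with $C+y$ a base of $\cM$, then combine $w(A)\ge w(C)+w(y)$ with $w(y)\ge w(u_A)$.

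The paper's proof is shorter if one is willing to lean on the folklore that greedy realizes every maximum-weight base under some tie-breaking and behaves identically on $\cM$ and $\cM/u_B$ up to the penultimate step; your proof is more self-contained, avoids invoking the greedy algorithm altogether, and reuses Lemma~\ref{lem:two_sided_exchange}, which the section already relies on. Your observation that $u_B\in A$ forces $u_B=u_A$ (since otherwise $(A-u_A)+u_B=A-u_A$ has only $k-1$ elements) is exactly the right way to dispose of that boundary case.
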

\begin{proof}
Since $A$ is a maximum weight base of $\cM$ and $u_A$ is a minimum weight element of it, for an appropriately chosen tie breaking rule the greedy algorithm has the following two properties: it outputs $A$, and $u_A$ is the last element added by the greedy algorithm to its solution. One can verify that when the greedy algorithm is applied to the matroid $\cM / u_B$ with the same tie breaking rule, it will output the set $A - u_A$ (because it will make the same decisions when executed on either $\cM$ or $\cM / u_B$ up to the point where its solution is equal to the base $A - u_A$ of $\cM / u_B$). This implies the lemma since the greedy algorithm always outputs a maximum weight base.
\end{proof}

Combining the last lemma with the induction hypothesis, we get that there must exist a bijective function $h'\colon (A - u_A) \to (B - u_B)$ such that, for every $u \in A - u_A$, $(B - u_B - h'(u)) + u$ is a base of $\cM / u_B$  and $w(u) \geq w(h'(u))$. 
We can now construct the function $h\colon A \to B$ as follows. For every element $u \in A$,
\[
	h(u)
	=
	\begin{cases}
		u_B & \text{if $u = u_A$} \enspace,\\
		h'(u) & \text{otherwise} \enspace.
	\end{cases}
\]

The guarantee of $h'$ and the fact that $w(u_A) \geq w(u_B)$ by Observation~\ref{obs:exchange_pair} imply together that $w(u) \geq w(h(u))$ for every $u \in A$ as promised. Also, $h$ is clearly a bijective function since $h'$ is a bijective function and $h(u_A) = u_B$  is not in the range of $h'$. Finally, we observe the following.

%

\begin{lemma}
For every $u \in A$, $(B - h(u)) + u$ is a base of $\cM$.
\end{lemma}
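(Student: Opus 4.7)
The plan is to prove the lemma by straightforward case analysis on whether $u$ equals the distinguished element $u_A$ or not, leveraging the two sources of base-exchange information we already have: Observation~\ref{obs:exchange_pair} for the ``new'' value $h(u_A) = u_B$, and the inductive bijection $h'$ obtained from the contracted matroid $\cM / u_B$ for every other element.

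First I would handle the case $u = u_A$. By definition $h(u_A) = u_B$, so the claim reduces to showing $(B - u_B) + u_A$ is a base of $\cM$, which is precisely the first bullet of Observation~\ref{obs:exchange_pair}.

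Next I would handle the case $u \in A - u_A$. By definition $h(u) = h'(u)$, and the inductive guarantee on $h'$ states that $(B - u_B - h'(u)) + u$ is a base of $\cM / u_B$. By the standard characterization of contraction, a set $S$ is a base of $\cM / u_B$ if and only if $S \cupdot \{u_B\}$ is a base of $\cM$ (recall $u_B$ is itself part of the base $B$, hence independent). Applying this with $S = (B - u_B - h'(u)) + u$ gives that
\[
	\bigl((B - u_B - h'(u)) + u\bigr) + u_B = (B - h'(u)) + u = (B - h(u)) + u
\]
is a base of $\cM$, as desired.

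There is no real obstacle here: the only subtlety worth being explicit about is the translation between ``base of $\cM / u_B$'' and ``base of $\cM$ containing $u_B$,'' and that $u \neq u_B$ in the second case (which holds because $h'$ takes values in $B - u_B$, so $h'(u) \neq u_B$, and $u \in A - u_A$ while the set manipulation on the left keeps $u_B$ intact). Together the two cases give the bijective function $h$ with the required exchange property, completing the induction step and hence the proof of Lemma~\ref{lem:matroid_matching_greedy}.
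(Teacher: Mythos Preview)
Your proof is correct and follows essentially the same approach as the paper: a two-case analysis on whether $u = u_A$, using Observation~\ref{obs:exchange_pair} in the first case and the inductive guarantee on $h'$ together with the contraction correspondence in the second. The paper justifies $h(u) \neq u_B$ via bijectivity of $h$ rather than via the codomain of $h'$, but the arguments are otherwise identical.
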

\begin{proof}
For $u = u_A$, the lemma follows from Observation~\ref{obs:exchange_pair} since $h(u_A) = u_B$. Otherwise, we have $u \in A - u_A$, which implies, by the guarantee of $h'$, that $(B - u_B - h(u)) + u = (B - u_B - h'(u)) + u$ is a base of $\cM / u_B$. Thus, by the properties of contraction,
\[
	[(B - u_B - h(u)) + u] + u_B
	=
	(B - h(u)) + u
\]
is a base of $\cM$ (the equality holds since
the fact that $h$ is bijective implies $h(u) \neq h(u_A) = u_B$).
\end{proof}

\end{document}